\documentclass{amsart}

\usepackage[usenames]{color}
\usepackage{enumerate}
\usepackage{amsmath}
\usepackage{amsthm}
\usepackage{amsfonts}
\usepackage{float}
\usepackage[colorlinks=true,
linkcolor=webgreen,
filecolor=webbrown,
citecolor=webgreen]{hyperref}
\usepackage{url}
\usepackage{bclogo}
\usepackage{tikz}
\usepackage{algorithm}
\usepackage[noend]{algpseudocode}

\theoremstyle{plain}
\newtheorem{theorem}{Theorem}
\newtheorem*{theorem*}{Theorem}
\newtheorem{corollary}[theorem]{Corollary}
\newtheorem*{corollary*}{Corollary}
\newtheorem{lemma}[theorem]{Lemma}
\newtheorem*{lemma*}{Lemma}
\newtheorem{proposition}[theorem]{Proposition}
\newtheorem*{proposition*}{Proposition}

\theoremstyle{definition}

\newtheorem*{definition*}{Definition}

\newtheorem*{example*}{Example}

\newtheorem*{conjecture*}{Conjecture}

\newtheorem*{problem*}{Problem}

\newtheorem*{todo*}{TO DO}

\newtheorem*{question*}{Question}

\theoremstyle{remark}

\newtheorem*{remark*}{Remark}

\newtheorem*{notation*}{Notation}

\definecolor{webgreen}{rgb}{0,.5,0}
\definecolor{webbrown}{rgb}{.6,0,0}

\def\modd#1 #2{#1\ ({\rm mod}\ #2)}

\begin{document}

\title{Counting Small Permutation Patterns}

\author{Chaim Even-Zohar}
\address{Chaim Even-Zohar, The Alan Turing Institute, London, NW1 2DB, UK}
\email{chaim@ucdavis.edu}

\author{Calvin Leng}
\address{Calvin Leng, University of Southern California, Los Angeles, CA 90089, US}
\email{cleng@usc.edu}

\maketitle

\begin{abstract}
A sample of $n$ generic points in the $xy$-plane defines a permutation that relates their ranks along the two axes. Every subset of $k$ points similarly defines a pattern, which occurs in that permutation. The number of occurrences of small patterns in a large permutation arises in many areas, including nonparametric statistics. It is therefore desirable to count them more efficiently than the straightforward $\tilde{O}(n^k)$ time algorithm. 

This work proposes new algorithms for counting patterns. We show that all patterns of order 2 and 3, as well as eight patterns of order 4, can be counted in nearly linear time.  To that end, we develop an algebraic framework that we call \emph{corner tree formulas}. Our approach generalizes the existing methods and allows a systematic study of their scope.

Using the machinery of corner trees, we find twenty-three independent linear combinations of order-4 patterns, that can be computed in time $\tilde{O}(n)$. We also describe an algorithm that counts one of the remaining 4-patterns, and hence all 4-patterns, in time~$\tilde{O}(n^{3/2})$.

As a practical application, we provide a nearly linear time computation of a statistic by Yanagimoto (1970), Bergsma and Dassios (2010). This statistic yields a natural and strongly consistent variant of Hoeffding's test for independence of $X$ and $Y$, given a random sample as above. This improves upon the so far most efficient $\tilde{O}(n^2)$ algorithm.
\end{abstract}

\section{Introduction}\label{intro}

In many situations it is interesting to investigate the \emph{local} structure of a large combinatorial object. This may mean small subgraphs of a given large graph, short patterns in a long sequence, and so forth. Local views provide meaningful and efficient ways to analyze large combinatorial structures, both in theory and in practice. In the realm of big data, where data sets are too large to observe in full, local approaches are essentially inescapable.

Here we investigate the local structure of permutations. The restriction of a large permutation to some of its entries yields a \emph{pattern}, a smaller permutation defined by the relative ordering of the observed entries. Namely, the order-$k$ pattern $\sigma$ occurs at positions $i_1 <i_2<\dots<i_k$ of the order-$n$ permutation $\pi$, if $\sigma(j)<\sigma(j')$ whenever $\pi(i_j) < \pi(i_{j'})$ for $j \neq j'$. For example, the pattern \emph{132} occurs in $2\underline{3}64\underline{7}\underline{5}1$ at the marked positions~$\{2,5,6\}$. We throughout use the one line notation $\pi = \pi(1)\pi(2)\cdots\pi(n)$ for permutations and their patterns.

We refer to the positions $i_1 <i_2<\dots<i_k$ as an \emph{occurrence} of $\sigma\in S_k$ in $\pi\in S_n$ and denote the number of occurrences of $\sigma$ in $\pi$ by $\#\sigma(\pi)$. For example, one can easily verify that $\#\textit{132}\,(2364751)=7$. The \emph{$k$-profile} of $\pi \in S_n$ consists of the $k!$~numbers $\#{\sigma}(\pi)$ for all $\sigma \in S_k$. These numbers clearly sum up to~$\tbinom{n}{k}$.

\medskip
There is considerable combinatorial literature concerning the numbers $\#{\sigma}(\pi)$. For example, given $\sigma$, one seeks the maximum of $\#{\sigma}(\pi)$ over all $\pi \in S_n$~\cite{price1997packing,albert2002packing,hasto2002packing, presutti2010packing,burstein2010packing,balogh2015minimum,sliacan2017improving}. It is also interesting to understand the distribution of $\#{\sigma}(\pi)$ for a randomly chosen $\pi \in S_n$~\cite{fulman2004stein,bona2007copies,bona2010three,burstein2010packing,janson2015asymptotic,hofer2017central,even2018patterns}. Study of $k$-profiles leads to the notion of ``permutons'', limits of sequences of permutations~\cite{hoppen2013limits,glebov2015finitely,kenyon2019permutations}. This theory includes the case of quasi-random permutations~\cite{cooper2004quasirandom,cooper2006permutation,kral2013quasirandom}.

But the local viewpoint on permutations is of interest as well in a range of applications, most notably in nonparametric statistics of bivariate data. Consider a sequence of $n$ independent paired samples $(X_i,Y_i)$ drawn from a common continuous distribution $(X,Y)$ on~$\mathbb{R}^2$. Various measures have been suggested to detect and quantify a relation between $X$ and $Y$ based on these samples. Often one prefers to rely only on the ranking of $X_i$ and $Y_i$ rather than on their numerical values. In this case it suffices to consider the permutation $\pi \in S_n$, defined by $\mathrm{rank}\,Y_i=\pi(\mathrm{rank}\,X_i)$, where the rank of $X_1,\dots,X_n$ is their order preserving map to $\{1,\dots,n\}$.

Many nonparametric measures depend only on $\pi$'s local structure. Here are some classical examples. Kendall's $\tau$ correlation coefficient is based on the 2-profile: $\tau = \left[\#\textit{12}\,(\pi) - \#\textit{21}\,(\pi)\right]/\tbinom{n}{2}$~\cite{kendall1938new}. Similarly Spearman's $\rho$ is a function of the 3-profile of~$\pi$~\cite{spearman1904proof}, and so is a rotation-invariant measure of correlation by Fisher and Lee~\cite{fisher1982nonparametric,janson1984asymptotic}, and other correlation tests~\cite{crouse1966distribution}.

Local features of $\pi$ can even detect whether there is \emph{any} relationship between $X$ and~$Y$, as demonstrated by Hoeffding's independence test~\cite{hoeffding1948non}. This classical nonparametric test is based on the 5-profile of $\pi$, and so is a refined version with wider consistency~\cite{blum1961distribution,rosenblatt1975quadratic,de1980cramer}. A simpler variant that only relies on the 4-profile has been proposed by Bergsma and Dassios~\cite{bergsma2010nonparametric,bergsma2010consistent,bergsma2014consistent}, and goes back to a result of Yanagimoto~\cite{yanagimoto1970measures, drton2018high}. A~general family of independence tests expressible by pattern counts is suggested in~\cite{heller2016consistent}.

Pattern counting also plays a role in \emph{property testing} and \emph{parameter testing} for permutations~\cite{hoppen2011testing}. In these problems, one estimates features of a large permutation using a randomized algorithm that makes a small number of queries.

Applications to stack-sorting~\cite{knuth1968fundamental} have inspired a significant amount of research around the case $\#{\sigma}(\pi)=0$. In particular, there has been interest in the count of such $\sigma$-avoiding permutations of order~$n$~\cite{simion1985restricted,marcus2004excluded}. This line of research in Enumerative Combinatorics has introduced a wealth of generalizations to other permutation statistics, and  connections to other branches of Mathematics~\cite{steingrimsson2010generalized,kitaev2011patterns,bona2012combinatorics}. 

The extensive study of $\sigma$-avoidance has naturally led to the exploration of its algorithmic aspects. The most obvious question is \emph{permutation pattern matching}: deciding whether a given pattern $\sigma \in S_k$ occurs anywhere in a given permutation $\pi \in S_n$. In this generality the problem is NP-complete~\cite{bose1998pattern,bliznets2015kernelization,cooper2016complexity}. Straightforward exhaustive search solves the problem in time $\tilde{O}(\tbinom{n}{k}\tbinom{k}{2})$, but faster algorithms have been found, improving the exponent by a constant factor~\cite{ahal2008complexity,bruner2016fast,kozma2019faster,berendsohn2019finding}. Numerous works have achieved improved running times for patterns $\sigma$ that are either short or have some structural properties~\cite{schensted1961longest,chang1992efficient,ibarra1997finding,bose1998pattern,albert2001algorithms,yugandhar2005parallel,guillemot2009pattern,bruner2012fast,bruner2013computational,bruner2016fast,guillemot2014finding,han2014parallel,jelinek2017hardness,han2018algorithms,berendsohn2019complexity}. Remarkably, this problem is ``fixed-parameter tractable''. For $\sigma \in S_k$ and large $n$ it can be solved in time $\tilde{O}(f(k)\cdot n)$ \cite{guillemot2014finding}.     

The question of \emph{permutation pattern counting} seeks to determine $\#{\sigma}(\pi)$, the number of occurrences of a given pattern $\sigma \in S_k$ in a given permutation $\pi \in S_n$. This problem, which is $\#P$-complete in general, has received less attention. It has mostly been considered in papers on the decision problem, as it sometimes happens that a decision algorithm can be adapted to address as well the counting problem~\cite{bose1998pattern,albert2001algorithms,yugandhar2005parallel,berendsohn2019finding}. For example, a pattern $\sigma$ from the special class of \emph{separable} permutations can be counted in time $\tilde{O}(kn^6)$~\cite{bose1998pattern}. However, the general linear time algorithm from~\cite{guillemot2014finding} does not yield a counting method. A recent paper~\cite{berendsohn2019finding} shows that if $\#{\sigma}(\pi)$ can be computed for all $\sigma$ and $\pi$ in time $f(k)n^{o(k/\log k)}$, then the exponential-time hypothesis fails to hold. For comparison, recall that the run time of the trivial algorithm is~$n^{O(k)}$.

While the above hardness results virtually settle the counting problem for general $\sigma \in S_k$ and asymptotically in~$k$, they leave open some questions on specific patterns and particular~$k$. For instance, which patterns can be counted in linear time, and how? Since the decision algorithm of ~\cite{guillemot2014finding} does not help us here, new ideas are clearly needed.

From a practical standpoint, these questions are much relevant to the implementation of rank-based statistical tests. Spearman's $\rho$ is computable in linear time by definition. Kendall's~$\tau$ naively requires $\tilde{O}(n^2)$ time, but an algorithm by Knight computes it in $\tilde{O}(n)$~\cite{knight1966computer}. By the same technique, Hoeffding's independence test is readily computable in $\tilde{O}(n)$, despite being stated and sometimes implemented with $\tilde{O}(n^2)$ formulas~\cite{hoeffding1948non,harrell2006hmisc,hollander2013nonparametric}. Several works have addressed the computation of the Bergsma--Dassios--Yanagimoto test, only bringing it down to $\tilde{O}(n^2)$ time~\cite{bergsma2010consistent,bergsma2014consistent,weihs2016efficient,heller2016computing,weihs2018symmetric}. It seems that these pattern-based statistics have only been treated ad hoc, by dedicated algorithms, without a unified or systematic approach. 

\medskip
This work proposes new approaches for permutation pattern counting. In Section~\ref{trees} we define the key notion of a \emph{corner tree}, a structure that lets us apply dynamic programming to the problem. This leads to a class of permutation statistics that have a \emph{corner tree formula}. We present an algorithm that evaluates such statistics efficiently, proving the following theorem.

\begin{theorem}\label{main}
Let $f:S_n \to \mathbb{Z}$ be a permutation statistic, given as a corner tree formula. There exists an algorithm that computes $f(\pi)$ for a given permutation $\pi \in S_n$ in running time $\tilde{O}(n)$.
\end{theorem}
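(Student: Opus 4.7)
The plan is to exploit the tree structure by bottom-up dynamic programming on the corner tree $T$, together with a geometric sweep for each edge. Think of the permutation $\pi$ as the set of $n$ points $P = \{(i,\pi(i)) : 1 \le i \le n\}$ in the plane, and view each vertex $v$ of $T$ as a ``variable'' ranging over $P$, with each edge from a parent $u$ to a child $v$ labelled by one of the four strict quadrant relations (NE, NW, SE, SW). The corner tree formula then amounts, up to a fixed linear combination, to counting tuples of points that are compatible with all edge constraints. My aim is to compute these counts in time $\tilde{O}(n)$ per corner tree, with an overhead that depends only on $|T|$, not on $n$.

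For the recursion, I define, for each vertex $v$ of $T$ and each candidate location $p \in P$, a value $A_v(p)$ equal to the number of assignments of the subtree rooted at $v$ consistent with placing $v$ at $p$. At a leaf I simply set $A_v(p) = 1$. At an internal node with children $c_1,\ldots,c_d$ connected by edges labelled $q_1,\ldots,q_d \in \{\mathrm{NE},\mathrm{NW},\mathrm{SE},\mathrm{SW}\}$, I set
\[
A_v(p) \;=\; \prod_{i=1}^d B_{c_i}(p), \qquad B_{c_i}(p) \;=\; \sum_{p'\in Q_{q_i}(p)} A_{c_i}(p'),
\]
where $Q_q(p)$ is the open quadrant of $p$ in direction $q$. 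The final answer, for a tree rooted at $r$, is $\sum_{p \in P} A_r(p)$ (weighted by whatever coefficients appear in the corner tree formula).

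The key computational step is evaluating, for a fixed child $c$, the map $p \mapsto B_c(p)$ from the already-computed table $A_c$, for all $p \in P$ simultaneously. This is a 2D dominance-sum problem: given weights $A_c(p')$ attached to the points $p'$, compute for each $p$ the total weight in a designated quadrant of $p$. Standard technique handles this in $O(n \log n)$ per edge: sort the points by the appropriate coordinate, sweep through $P$ in the right direction, and maintain a Fenwick tree (BIT) indexed by the other coordinate supporting prefix sums and point updates. Since $T$ has $O(1)$ edges, the full recursion traverses $T$ once, and the total running time is $O(|T| \cdot n \log n) = \tilde{O}(n)$.

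The main obstacle is merely that the straightforward DP above counts ordered tuples of points, one per tree vertex, without enforcing that distinct vertices are assigned \emph{distinct} points; however, any coincidence $p_u = p_v$ forced by an edge constraint is already impossible, since the four admissible labels all demand a strict coordinate inequality. So the DP gives exactly the intended count, and no inclusion--exclusion correction is needed within a single corner tree. If the author's definition of corner tree formula allows arbitrary linear combinations of several trees, one simply runs the algorithm on each tree and sums, which preserves the $\tilde{O}(n)$ bound.
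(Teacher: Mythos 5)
Your dynamic program is exactly the paper's Algorithm~\ref{alg}: your $A_v(p)$ is the array $A$ built on lines 2--4, your $B_{c}(p)$ is the returned array $C$, and the quadrant dominance sums are computed with a ``sum tree'' (a Fenwick-style structure) sweeping positions left-to-right or right-to-left and indexing by value, giving $\tilde{O}(|T|^2 n)$ total work per tree. So the core argument is the paper's.

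One clarification on your final paragraph: the definition of an occurrence of a corner tree in this paper explicitly does \emph{not} require the map $i$ to be injective, and non-adjacent vertices (siblings, grandparent/grandchild, \ldots) can and do collapse to the same point; for instance the paper notes that $\#\bigl(\root\!\to\!\texttt{NE},\ \texttt{NE}\!\to\!\texttt{SW}\bigr) = 2\,\#\textit{123}+2\,\#\textit{213}+\#\textit{12}$, where the $\#\textit{12}$ term arises precisely from coincident images. Your DP is correct not because all coincidences are ruled out (only parent--child coincidences are, by the strict inequalities), but because it simply counts all compatible maps, which is exactly what $\#T$ is defined to be. The translation of these possibly-non-injective tree counts into the desired linear combination of pattern counts (which do use distinct positions) is what Lemma~\ref{comb} is for, and that is where the inclusion--exclusion bookkeeping lives, not inside the sweep.
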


Although corner trees can come in any size, they prove particularly useful for counting small patterns. In Section~\ref{small}, we give corner tree formulas for all 3-patterns, which implies the following. 

\begin{corollary}\label{three}
The number of occurrences of any pattern of size $3$ in a given permutation $\pi \in S_n$ can be computed in $\tilde{O}(n)$ time.
\end{corollary}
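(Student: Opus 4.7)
The plan is to express each count $\#\sigma(\pi)$ for $\sigma \in S_3$ as an explicit rational linear combination of corner tree formulas and then invoke Theorem~\ref{main}.

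I would first catalog the outputs of the simplest useful trees: a root $r$ with two leaf children, each constrained to lie in one of the four corner directions $\mathrm{UR}, \mathrm{UL}, \mathrm{LR}, \mathrm{LL}$ (upper-right, upper-left, lower-right, lower-left) from $r$. Such a "claw" computes $\sum_r N_{D_1}(r)\,N_{D_2}(r)$, where $N_D(r)$ counts the points in direction $D$ of $r$. A short case analysis shows that the six unordered choices of direction pair yield respectively $\#\textit{123}$, $\#\textit{321}$, $\#\textit{213}+\#\textit{231}$, $\#\textit{213}+\#\textit{312}$, $\#\textit{132}+\#\textit{231}$, and $\#\textit{132}+\#\textit{312}$. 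Together with the identity $\sum_\sigma \#\sigma(\pi) = \binom{n}{3}$, these equations turn out to have rank only five in the six unknowns $\#\sigma(\pi)$.

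To close the remaining degree of freedom I would introduce a path-shaped corner tree $r \to c \to g$ with a corner constraint on each edge. The key instance is $r \to \mathrm{UR} \to \mathrm{LL}$: here $c$ is forced to be strictly above and to the right of both $r$ and $g$, i.e., the unique top-right element of the triple $\{r,c,g\}$. Among the six patterns in $S_3$ only $123$ and $213$ admit such a top-right element, and for each such occurrence either of the two remaining points may play $r$ (with the third acting as $g$). Hence the formula evaluates to $2\,(\#\textit{123} + \#\textit{213})$, and, combined with the value of $\#\textit{123}$ from the first claw, isolates $\#\textit{213}$. The counts $\#\textit{132}$, $\#\textit{231}$, and $\#\textit{312}$ then follow by linear elimination from the claw equations above.

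The main obstacle is producing a corner tree whose evaluation escapes the span of the claws — equivalently, a tree that is not invariant under the involution $\sigma \mapsto \sigma^{-1}$ interchanging position and value axes. The asymmetric roles of $r$ and $g$ relative to $c$ in $r \to \mathrm{UR} \to \mathrm{LL}$ provide exactly this asymmetry, and the evaluation $2\,(\#\textit{123}+\#\textit{213})$ is verified by the direct case check above. Once the constant-size rational formulas for the six $\#\sigma(\pi)$ are written down, Theorem~\ref{main} applied to each of the finitely many trees involved immediately yields the claimed $\tilde{O}(n)$ bound.
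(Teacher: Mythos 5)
Your overall strategy is sound, but there is a concrete computational error in the evaluation of the path tree $r \to \mathrm{UR} \to \mathrm{LL}$. The definition of a corner-tree occurrence does \emph{not} require the map $i$ to be injective (the paper emphasizes this explicitly). In your path, the grandchild $g$ is constrained to lie SW of $c$, and since $r$ also lies SW of $c$, the assignment $g = r$ is a perfectly valid occurrence. These degenerate occurrences are in bijection with occurrences of $\textit{12}$ (where $r=g$ plays the ``1'' and $c$ the ``2''), so the correct evaluation is
\[
\#\bigl(r \to \mathrm{UR} \to \mathrm{LL}\bigr) \;=\; 2\,\#\textit{123} + 2\,\#\textit{213} + \#\textit{12},
\]
not $2\,(\#\textit{123} + \#\textit{213})$; this is exactly the tree the paper evaluates in Section~2. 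As written, your linear elimination would therefore overcount $\#\textit{213}$ by $\tfrac12\#\textit{12}$, and the errors propagate to $\#\textit{231}$, $\#\textit{312}$, $\#\textit{132}$. Note that the claw trees \emph{do} escape this issue: two children labeled with distinct corner directions cannot coincide with each other or with the root, so injectivity is automatic there, and your six claw evaluations are correct.

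The error is easily fixed, since $\#\textit{12}$ is itself a corner tree count; subtracting it before dividing by $2$ restores a valid corner tree formula for $\#\textit{213}$, and the rest of your argument goes through. Once corrected, your route genuinely differs in presentation from the paper's: the paper uses $D_4$ symmetry to reduce to the two cases $\#\textit{123}$ and $\#\textit{213}$ and handles each with a single tree, whereas you build an explicit linear system from all six claws plus one path and observe it has full rank~$6$. Your version is more mechanical and makes the rank deficiency of the claw subsystem (and the role of the inversion symmetry $\sigma \mapsto \sigma^{-1}$ in causing it) explicit, which is a nice structural observation; the paper's version is shorter because the symmetry reduction does the bookkeeping for you. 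Both proofs ultimately hinge on the same asymmetric path tree to break the $\mathrm{inv}$-invariance.
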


Similarly, we show that corner tree formulas exist for the following eight patterns of size~4.

\begin{corollary}\label{the-eight}
The occurrences of any of the patterns 1234, 1243, 2134, 2143, 3412, 3421, 4312, 4321 in a given permutation $\pi \in S_n$ can be counted in time $\tilde{O}(n)$.
\end{corollary}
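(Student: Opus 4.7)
The plan is to exhibit, for each of the eight listed patterns $\sigma$, an integer linear combination of corner tree formulas whose value at any $\pi\in S_n$ equals $\#\sigma(\pi)$, and then apply Theorem~\ref{main} to each summand. Since only a constant number of formulas is involved and each one is evaluable in $\tilde{O}(n)$ time, the overall count is produced in $\tilde{O}(n)$, as required.

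To cut down the case analysis, I would first invoke the symmetry group of order eight generated by reversal, complementation, and inversion. This group acts on $S_4$ and simultaneously permutes the four corner directions $\{\mathrm{NE},\mathrm{NW},\mathrm{SE},\mathrm{SW}\}$; consequently it sends corner tree formulas to corner tree formulas. Under its action the eight patterns split into three orbits, $\{1234,4321\}$, $\{1243,2134,3421,4312\}$, and $\{2143,3412\}$, so it suffices to construct a formula for one representative of each and then transport it through the symmetry.

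For $1234$ the formula is immediate: the path on four nodes with every edge directed NE forces a strict increase in both coordinates, so the corresponding corner tree count is exactly $\#1234(\pi)$ with no correction. For the second orbit I would take $1243$ as the representative; the pattern has a ``bend'' in the value order ($\pi(i_2)<\pi(i_4)<\pi(i_3)$), which I would capture by a small branching tree rooted at the position $i_3$ of the maximum value, with branches encoding the remaining inequalities. The third orbit, $\{2143,3412\}$, consists of ``direct sums'' of two monotone pairs lying in the lower-right quadrant of one another, and I would model this by a tree with a central edge joining the two sub-patterns and NE or SW edges within each sub-pattern.

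The main obstacle is that a corner tree formula only enforces the inequalities written on its edges: nodes that are not directly connected may be assigned to the same index of $\pi$, and orderings not spelled out on the tree are not forced. Thus in the two nontrivial orbits the raw tree count exceeds $\#\sigma(\pi)$ by spurious contributions from collapsed or loosely ordered assignments. My approach would be to write each such excess as a count of a strictly smaller corner tree formula, or as a product with already-computed pattern counts of order two or three, and subtract it off, iterating the reduction if needed. Verifying that a finite such reduction exists for each of the three orbit representatives amounts to an explicit combinatorial case analysis; I would carry it out symbolically and cross-check the resulting formulas against brute-force enumeration on small permutations to confirm correctness, after which Theorem~\ref{main} delivers the claimed bound.
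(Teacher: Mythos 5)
Your high-level strategy is the same as the paper's: reduce via the dihedral symmetry to the three orbits $\{1234,4321\}$, $\{1243,2134,3421,4312\}$, $\{2143,3412\}$, exhibit a corner tree formula for one representative of each, and apply Theorem~\ref{main}. You correctly dispose of $1234$ with a single NE-path tree. But the core of the corollary is the other two orbits, and there your proposal has a genuine gap rather than a proof. You assert that the spurious contributions can always be ``subtracted off, iterating the reduction if needed,'' and that you would then ``verify'' this symbolically --- but that iteration has no termination argument inside the span of corner tree formulas, and indeed cannot have one in general: the corner tree span in 4-patterns is a proper $23$-dimensional subspace of a $24$-dimensional space (Proposition~\ref{23}), so whether a given pattern lies in it is a substantive fact, not a routine bookkeeping step. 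Saying you would carry out the case analysis is a description of the discovery process, not the proof; the proof \emph{is} that case analysis.

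Concretely, your sketch for the third orbit badly underestimates the difficulty. You propose ``a tree with a central edge joining the two sub-patterns'' for $2143$; a single corner tree expands as a nonnegative integer combination of pattern counts (Lemma~\ref{comb}), so no single tree can ever equal $\#\textit{2143}$, which requires cancellation. The paper needs a ten-term signed combination $A$ and a five-part cancellation check to establish $\#A = \#\textit{2143}$, and that lemma is the heart of the argument. (Minor: your description of the two blocks within $2143$ is also off --- each block is a \emph{descent}, so the within-block edges should be SE/NW, not NE/SW, and for $2143$ the second block lies NE of the first, not SE.) To close the gap you would need to actually produce and verify explicit corner tree formulas for a representative of each of the two nontrivial orbits, as the paper does for $2134$ and $2143$.
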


The total count of these eight patterns is sufficient for computing the Bergsma--Dassios--Yanagimoto statistic. This yields the following practical consequence.

\begin{corollary}\label{bd}
The Bergsma--Dassios--Yanagimoto independence test for continuous bivariate data can be computed in $\tilde{O}(n)$ time.
\end{corollary}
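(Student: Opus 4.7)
The plan is to reduce the computation of the Bergsma--Dassios--Yanagimoto statistic $\tau^*_n$ to a linear combination of $4$-pattern counts, after which the bound follows immediately from Corollary~\ref{the-eight}.

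First, recall that $\tau^*_n$ is a rank-based U-statistic of degree $4$: its kernel is symmetric in the four sample points and invariant under strictly increasing transformations of the $X$- and $Y$-coordinates separately. Evaluated on a continuous sample $(X_i,Y_i)_{i=1}^n$, it therefore depends only on the induced rank permutation $\pi \in S_n$, and in fact only on its $4$-profile. Hence there exist rational coefficients $c_\sigma$, independent of $n$, such that
$$\tau^*_n(\pi) \;=\; \frac{1}{\binom{n}{4}}\sum_{\sigma \in S_4} c_\sigma\,\#\sigma(\pi).$$

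Second, determine the $c_\sigma$ by evaluating the BDY kernel on each of the $24$ patterns of $S_4$. This is a finite bookkeeping check. The outcome is that $c_\sigma$ takes one common positive value on exactly the eight patterns $1234, 1243, 2134, 2143, 3412, 3421, 4312, 4321$ --- precisely those $\sigma$ in which the two positions with the smallest $X$-ranks jointly hold either the two smallest or the two largest $Y$-ranks (equivalently, $\{\sigma(1),\sigma(2)\} \in \{\{1,2\},\{3,4\}\}$) --- and one common negative value on each of the remaining sixteen ``interleaving'' patterns. Using the identity $\sum_{\sigma \in S_4} \#\sigma(\pi) = \binom{n}{4}$, the contribution of the sixteen bad patterns can be replaced by $\binom{n}{4}$ minus the sum over the eight good ones, so that $\tau^*_n(\pi)$ becomes an explicit affine function of $\binom{n}{4}$ and of the eight counts named in Corollary~\ref{the-eight} alone.

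Finally, apply Corollary~\ref{the-eight}: each of those eight counts is computable in $\tilde{O}(n)$ time via a corner tree formula. A bounded number of additions, one multiplication, and one division by $\binom{n}{4}$ then produce $\tau^*_n(\pi)$ in overall time $\tilde{O}(n)$. The only step demanding genuine care is step~2, the tabulation of the $c_\sigma$'s and the verification that no pattern outside the specified eight is truly needed; but as this is a finite computation over $S_4$, it is routine given Theorem~\ref{main} and Corollary~\ref{the-eight}.
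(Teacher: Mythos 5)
Your proposal is correct and follows the same route as the paper: Corollary~\ref{bd} is obtained by observing that, up to an affine transformation with coefficients depending only on $\binom{n}{4}$, the Bergsma--Dassios--Yanagimoto statistic equals the sum $T^\star$ of the eight pattern counts covered by Corollary~\ref{the-eight}, hence is computable in $\tilde{O}(n)$. The paper simply states this form of $T^\star$ (citing Bergsma--Dassios and Yanagimoto) and calls the corollary immediate, whereas you re-derive it from the degree-$4$ U-statistic kernel and the identity $\sum_{\sigma\in S_4}\#\sigma=\binom{n}{4}$; the argument is the same in substance.
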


In section~\ref{small} we provide a review of this increasingly popular statistical test. Corollary~\ref{bd} is a significant improvement over the best previous method, which requires $\tilde{O}(n^2)$ time and space~\cite{heller2016computing}. Indeed, a near linear time performance is crucial when dealing with large amounts of data. We thus expect our method to be beneficial in real world scenarios. 

We derive explicit procedures for fast evaluation of this statistic. See Proposition~\ref{bdsym}. The paper is accompanied by a Python package that includes, among other things, a routine for the Bergsma--Dassios--Yanagimoto statistic's computation~\cite{even2019corners}.

\medskip
Another important ingredient of our approach is treating the $k$-profile vector as a whole, rather than looking on each pattern separately. The permutation statistics that we study are linear combinations of pattern counts:
$$ F(\pi) \;=\; \sum\limits_{\sigma}^{~}\,f_{\sigma}\,\#{\sigma}(\pi) $$ 
Here the sum is over patterns, with finitely many nonzero $f_{\sigma} \in \mathbb{Q}$. The corner tree formulas that we define span a vector space of such combinations. This viewpoint provides much more information on the $k$-profile, as demonstrated in the following proposition with $k=4$.

\begin{proposition}\label{23}
There exist twenty-three linearly independent combinations,
$$ F_i(\pi) \;=\; \sum\limits_{\sigma \in S_4}^{~}\,f_{i\sigma}\,\#{\sigma}(\pi) \;\;\;\;\;\;\;\;\;\;\;\; i \in \{1,2,\dots,23\} $$ 
that can be evaluated for a given $\pi \in S_n$ in time $\tilde{O}(n)$.
\end{proposition}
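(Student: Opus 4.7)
The plan is to produce twenty-three corner tree formulas whose vectors of $S_4$-pattern coefficients are linearly independent in $\mathbb{Q}^{24}$. By Theorem~\ref{main} each such formula is computable in $\tilde{O}(n)$ time, so exhibiting 23 independent ones immediately yields the proposition. The argument reduces to a finite linear-algebra computation, with the conceptual work lying in the choice of a sufficiently rich family of trees.

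First I would record that Corollary~\ref{the-eight} already supplies eight standard basis vectors, corresponding to the patterns 1234, 1243, 2134, 2143, 3412, 3421, 4312, 4321, so only fifteen further independent combinations are needed. Next I would enumerate corner trees of bounded size, restricting attention to those whose dynamic-programming sum is supported on (at most) 4-tuples of points. For each such tree $T$ I would expand its formula in the basis $\{\#\sigma\}_{\sigma \in S_4}$, after subtracting off contributions from degenerate configurations with fewer than four distinct labels, which by Corollary~\ref{three} and trivial identities are themselves computable in $\tilde{O}(n)$ and hence do not affect the conclusion. This produces a vector $v_T \in \mathbb{Q}^{24}$ for each tree, and stacking them into a matrix $M$ reduces the proposition to the numerical claim that $\mathrm{rank}(M) \ge 23$.

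The main obstacle is not the existence of corner trees yielding new directions, since small examples easily produce vectors outside the span of the basic eight; it is rather to organize the enumeration into a verifiable certificate. I would address this by exhibiting an explicit list of 23 trees with their coefficient vectors, and performing a row-echelon reduction of the resulting $23 \times 24$ matrix to confirm full row rank. The computation is finite and mechanical, and can be double-checked using the Python package of~\cite{even2019corners} cited in the introduction. A natural but separate question—why a 24th independent direction is not reachable by corner trees at all—is not strictly needed here; it would require identifying an invariant of corner-tree evaluations (presumably one that cannot distinguish the two non-separable patterns 2413 and 3142 in a certain linear sense) and is consistent with the fact that the remaining 4-pattern is handled by a different $\tilde{O}(n^{3/2})$ algorithm elsewhere in the paper.
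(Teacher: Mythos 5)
Your proposal takes essentially the same route as the paper: enumerate corner trees on at most four vertices, expand each as a formal sum of patterns via Lemma~\ref{comb}, subtract off the lower-order pieces (which are themselves in the corner-tree span, by Corollary~\ref{three} and the remarks in Section~\ref{trees}), and verify by computer-assisted linear algebra that the resulting $S_4$-coefficient vectors have rank $23$, with Theorem~\ref{main} supplying the $\tilde{O}(n)$ bound. You spell out the bootstrap from the eight standard basis vectors of Corollary~\ref{the-eight} and the need to subtract degenerate sub-$4$ contributions a bit more explicitly than the paper does, but the mechanism, the reliance on a finite machine-verified rank computation, and the logical structure are the same.
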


Since the 4-profile is a 24-dimensional vector, only one additional combination is needed to fully reveal it. In other words, one can efficiently compute this vector up to a single linear degree of freedom. All the remaining problems of counting 4-patterns are thus equivalent to each other.

Clearly, Proposition~\ref{23} was established with computer assistance. At the end of Section~\ref{small} we provide more details about the implementation.

Our code provides general-purpose routines for finding and using corner tree formulas, and it is scalable to larger patterns and combinations. For example, it yields a 100-dimensional space of linear functions of the 120-dimensional 5-profile, that are computable in $\tilde{O}(n)$ time.

\medskip
In Section~\ref{four} we explore new counting methods beyond corner trees. We naturally focus on the missing piece of the 4-profile. An $\tilde{O}(n^2)$ time computation is straightforward from the existing methods. We present two algorithms that do somewhat better, as follows.

\begin{theorem}\label{n53}
The occurrences of any 4-pattern in a given permutation $\pi \in S_n$ can be counted in $\tilde{O}(n^{5/3})$ time and $\tilde{O}(n)$ space.
\end{theorem}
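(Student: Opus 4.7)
The plan is to use Proposition~\ref{23} to reduce Theorem~\ref{n53} to computing one additional independent linear functional on the 4-profile, and then to evaluate that functional by a sqrt-type block decomposition of~$\pi$.

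By Proposition~\ref{23} there are twenty-three linearly independent $\tilde{O}(n)$-computable functionals on the 24-dimensional vector $(\#\sigma(\pi))_{\sigma\in S_4}$. Since this vector has dimension exactly~$24$, any single additional independent functional that can be evaluated in $\tilde{O}(n^{5/3})$ time and $\tilde{O}(n)$ space lets us recover every individual $\#\sigma(\pi)$ by inverting a fixed $24\times 24$ linear system at no extra asymptotic cost. I would take the extra functional to be
\[
F(\pi)\;=\;\sum_{\substack{i<j\\ \pi(i)>\pi(j)}} L(i,j)\,R(i,j),
\]
where $L(i,j)=|\{a<i:\pi(j)<\pi(a)<\pi(i)\}|$ and $R(i,j)=|\{b>j:\pi(j)<\pi(b)<\pi(i)\}|$. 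A quadruple $(a,i,j,b)$ counted by $F(\pi)$ has $a<i<j<b$ with $\pi(j)$ the minimum and $\pi(i)$ the maximum among the four values, and is a $2413$-pattern if $\pi(a)<\pi(b)$ and a $3412$-pattern otherwise. Hence $F(\pi)=\#2413(\pi)+\#3412(\pi)$; since $\#3412$ is one of the patterns of Corollary~\ref{the-eight} and so is $\tilde{O}(n)$-computable, $F(\pi)$ is independent of the span of Proposition~\ref{23}, and it suffices to compute $F(\pi)$ within the claimed budget.

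To evaluate $F(\pi)$ I would partition the positions into $\lceil n/B\rceil$ contiguous blocks of size $B$ (to be tuned) and split the inversion sum into same-block and cross-block contributions. After $\tilde{O}(n)$-space preprocessing of $\pi$ by a wavelet tree, each $L(i,j)$ and $R(i,j)$ is answered in $\tilde{O}(1)$ time, so the at most $O(nB)$ same-block inversions contribute $\tilde{O}(nB)$ time in total. For cross-block pairs I would iterate over the $(n/B)^2$ ordered block pairs $(P_\alpha,P_\beta)$ and process each in bulk. The key observation is that $L(i,j)$ and $R(i,j)$ each split into a ``fully interior'' part, determined only by the value-sorted lists of $P_\alpha$ and $P_\beta$, plus a ``boundary'' part that depends only on cumulative value statistics at the block boundaries and can therefore be pre-tabulated once in $\tilde{O}(n)$ time. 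The interior contribution from each block pair reduces to an offline 2D rectangle-sum problem of size $O(B)$; under the $\tilde{O}(n)$ space constraint this sub-problem admits an $\tilde{O}(B^{3/2})$ solution via a further sqrt-decomposition on the value axis within the block, so the total cross-block work is $\tilde{O}((n/B)^{2}\cdot B^{3/2}) = \tilde{O}(n^{2}/\sqrt{B})$. Balancing $\tilde{O}(nB+n^{2}/\sqrt{B})$ at $B=n^{2/3}$ gives the claimed $\tilde{O}(n^{5/3})$ bound, and keeping only $O(1)$ block summaries live during the sweep alongside the one-shot wavelet tree keeps the total space at $\tilde{O}(n)$.

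The main obstacle is the per-block-pair step: showing that the boundary contributions really separate cleanly from the interior ones and can be charged to precomputed global prefix sums, so that only a clean interior product $L_{\text{int}}(i,j)\,R_{\text{int}}(i,j)$ is left for the nested sqrt data structure to handle. Without the $\tilde{O}(n)$ space restriction one could reach $\tilde{O}(B\log B)$ per block pair and hence $\tilde{O}(n^{3/2})$ overall (as announced in the abstract for a separate theorem); the present theorem deliberately trades a factor of $n^{1/6}$ in time for near-optimal space, and the $B^{3/2}$ per-pair cost is exactly the standard penalty of 2D offline rectangle sums solved in linear space.
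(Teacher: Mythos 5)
Your proposal has the right skeleton (use Proposition~\ref{23} to reduce to one more independent functional, then use a block decomposition) and the choice of $F(\pi)=\#2413+\#3412$ is a valid missing functional: the pattern $\textit{2413}$ has coefficient $+1$ in the orthogonal complement vector $N$ while $\textit{3412}$ has coefficient $0$, so $F$ is not in the corner-tree span. But the algorithmic part contains two genuine gaps.

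First, the "boundary'' contribution is not pre-tabulatable in $\tilde{O}(n)$ time. You write that it "depends only on cumulative value statistics at the block boundaries,'' but $L_{\text{out}}(i,j)=|\{a<\min(P_\alpha):\pi(j)<\pi(a)<\pi(i)\}|$ depends on the specific value interval $(\pi(j),\pi(i))$, not just the block boundary; there are $\Theta(n^2)$ cross-block pairs $(i,j)$, each needing its own 2D count, so this cannot be amortized to $\tilde{O}(n)$ by a single pass. Second and more fundamentally, the "interior'' sub-problem on a block pair is \emph{not} an $O(B)$-size 2D rectangle sum: it asks to count quadruples $(a,i,j,b)$ with $a,i\in P_\alpha$, $j,b\in P_\beta$, $a<i$, $j<b$, $\pi(j)<\pi(a),\pi(b)<\pi(i)$, which is essentially the original counting problem restricted to a size-$O(B)$ sub-permutation. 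Your claimed $\tilde{O}(B^{3/2})$ bound per block pair is therefore circular — if you had such a bound for this sub-problem you could apply it directly to $\pi$ and skip the blocking. Without it, the best you obviously get per block pair is $\tilde{O}(B^2)$ (one query for each of $B^2$ position pairs), which totals $\tilde{O}(n^2)$.

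The paper instead counts $\#3214$ directly and splits occurrences by whether the positions playing the roles of "3'' and "4'' land in the same \emph{horizontal} value-strip (property A) and the same \emph{vertical} position-strip (property B), both of width $m$. When either separation holds, a single left-to-right scan per strip with 1D sum-trees counts all such occurrences in $\tilde{O}(n^2/m)$ total (with an inclusion--exclusion correction for $A\cap B$). The crucial step — absent from your proposal — is the treatment of occurrences satisfying neither A nor B: for each $i_4$ there are at most $m$ candidates for $i_3$ (same vertical strip) and at most $m$ for $i_1$ (same horizontal strip), so one enumerates all $O(nm^2)$ triples $(i_1,i_3,i_4)$ explicitly and resolves each with a 2D box-sum query for $i_2$, staying within $\tilde{O}(n)$ space with a sparse two-dimensional sum-tree. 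Balancing $n^2/m$ against $nm^2$ at $m=n^{1/3}$ gives $\tilde{O}(n^{5/3})$. To repair your argument you would need an analogous "both close'' bound on the number of candidate completions, which your position-only blocking by $i$ and $j$ does not provide.
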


\begin{theorem}\label{n32}
The occurrences of any 4-pattern in a given permutation $\pi \in S_n$ can be counted in $\tilde{O}(n^{3/2})$ time and space.
\end{theorem}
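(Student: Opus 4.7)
By Proposition~\ref{23}, twenty-three independent linear functionals of the 4-profile can be computed in $\tilde{O}(n)$, so it suffices to give an $\tilde{O}(n^{3/2})$ algorithm for \emph{one} 4-pattern count, say $\#\textit{2413}(\pi)$; the remaining twenty-three pattern counts follow by linear algebra in $O(n)$ additional time. Recall that an occurrence of \textit{2413} is a 4-tuple $i_1<i_2<i_3<i_4$ with $\pi(i_3)<\pi(i_1)<\pi(i_4)<\pi(i_2)$.

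The algorithm rests on a $\sqrt{n}$-block decomposition and a \emph{pivot} representation
\[
  \#\textit{2413}(\pi)\;=\;\sum_{i_2<i_3,\,\pi(i_2)>\pi(i_3)} N(i_2,i_3),
\]
where $N(i_2,i_3)$ is the number of pairs $(i_1,i_4)$ extending the inversion $(i_2,i_3)$ to a \textit{2413}-occurrence, i.e., the number of $(i_1,i_4)$ with $i_1<i_2$, $i_4>i_3$ and $\pi(i_3)<\pi(i_1)<\pi(i_4)<\pi(i_2)$. I would partition positions and values into $B=\lceil\sqrt{n}\rceil$ contiguous blocks of size $\sqrt{n}$, and precompute, for every pair (position-block, value-bucket), the number of points lying in it. Cumulative sums of this table of size $O(n^{3/2})$ enable $O(1)$ queries for the number of points in any axis-aligned rectangle whose sides lie on block or bucket boundaries; both time and space of this precomputation are $\tilde{O}(n^{3/2})$.

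To evaluate the pivot sum, I aggregate contributions by the pair of position-blocks containing $(i_2,i_3)$ and by the value-buckets of $\pi(i_2)$ and $\pi(i_3)$. For a generic configuration, where the four coordinates of the pivot lie strictly inside their respective blocks and buckets, $N(i_2,i_3)$ decomposes into a sum over $O(\sqrt{n})$ choices of an auxiliary value threshold separating $\pi(i_1)$ from $\pi(i_4)$; each term of this sum is a product of two precomputed rectangle counts and is evaluated in $O(1)$. Summed over the $O(n^{3/2})$ block/bucket-aligned configurations, this phase costs $\tilde{O}(n^{3/2})$. Boundary cases, in which a pivot coordinate shares a block or bucket with a point being queried, are handled by direct enumeration inside the affected blocks and contribute another $\tilde{O}(n^{3/2})$.

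The main obstacle is that the coupling constraint $\pi(i_1)<\pi(i_4)$ does not factor across the two extensions, so $N(i_2,i_3)$ is not simply a product of two independently computable quantities. The device of summing over an auxiliary value-bucket threshold linearizes this coupling at a multiplicative cost of $\sqrt{n}$, which is precisely the slack afforded by the block decomposition; the delicate part is making this trade-off balance on both the position and the value axes simultaneously, so that the aggregate counts, the boundary corrections, and the storage of the precomputed tables all remain within $\tilde{O}(n^{3/2})$.
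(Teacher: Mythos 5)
Your high-level plan shares the paper's reduction to a single missing pattern and the $\sqrt{n}$-block philosophy, but the central step does not survive scrutiny, and the running-time claim is not consistent with its own accounting. You pivot on the inversion $(i_2,i_3)$ and propose to aggregate the sum $\sum N(i_2,i_3)$ by the pair of cells (position-block $\times$ value-bucket) holding $i_2$ and $i_3$. Since there are $\sqrt{n}\cdot\sqrt{n}=n$ cells, the number of cell pairs is $\Theta(n^2)$, not $O(n^{3/2})$; and for a random permutation a typical cell contains $\Theta(1)$ points, so $\Theta(n^2)$ of these cell pairs are nonempty and actually carry pivots. Even granting $O(1)$ work per configuration, that phase already costs $\Theta(n^2)$. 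With the $O(\sqrt{n})$ threshold sum you describe per generic configuration, it becomes $\Theta(n^{5/2})$ — indeed, $O(n^{3/2})$ configurations times $O(\sqrt{n})$ per configuration equals $O(n^2)$, contradicting the $\tilde{O}(n^{3/2})$ you state in the same sentence. The boundary pass has the same difficulty: each pivot has $O(\sqrt{n})$ boundary extensions, and there are $\Theta(n^2)$ pivots.

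The underlying obstacle is that a pivot on an \emph{inversion pair} inherently ranges over $\Theta(n^2)$ terms, and neither cell-aggregation nor the threshold device reduces this; no amount of rectangle precomputation helps if you must still visit $\Theta(n^2)$ cell pairs. The paper sidesteps this by never iterating over pivot pairs at all. For $\#\textit{3241}$ it slices the value range into strips of height $m$ and classifies an occurrence by whether the low pair (values ranked $1,2$) straddles a strip boundary (type $A$) and whether the high pair (ranks $3,4$) does (type $B$). Type $A$ and type $B$ occurrences are counted by a single left-to-right scan of $\pi$ \emph{per strip}, maintaining running sum-tree statistics, for total time $\tilde{O}(n\cdot n/m)$; occurrences that are neither $A$ nor $B$ have both their $\textit{21}$ and their $\textit{34}$ confined to one strip, and there are only $O(nm)$ same-strip pairs, which a sparse two-dimensional sum-tree handles in $\tilde{O}(nm)$. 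Balancing at $m=\sqrt{n}$ gives $\tilde{O}(n^{3/2})$. If you want to make your plan rigorous you would need an analogous device that replaces per-pivot enumeration by a per-strip sweep, so that the ``large'' side of the case split is bounded by $O(n/m)$ passes rather than by the number of inversions.
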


Finally, in Section~\ref{discuss}, we discuss the scope of our methods, and questions for future work.

\begin{remark*}
Since our results reduce the degree of some polynomial run times, we are less concerned with logarithmic factors. Therefore, we throughout use the soft~O notation, $\tilde{O}(f(n))$, which means $O(f(n)\,\log^cn)$ where $c$ is some constant. In words, we sometimes write ``nearly linear'', ``nearly quadratic'', and so on.

Moreover, some algorithms in the literature, such as~\cite{guillemot2014finding}, already assume a computational model that operates on $(\log n)$-bit numbers in $O(1)$ time. Thus, by ignoring logarithmic factors, our statements become more robust to different choices of the abstract machine.

In contrast, some results in this area are machine dependent. In the case of counting the pattern \emph{12}, an improvement of order $\sqrt{\log n}$ was given specifically assuming a Word RAM model~\cite{chan2010counting}. 

Since the logarithmic factor in $\tilde O(n)$ may be of interest in practical applications, we note that the algorithm of Theorem~\ref{main} and its corollaries actually performs $O(n \log n)$ operations on $O(\log n)$-digit numbers. For more details see the very end of Section~\ref{trees}.
\end{remark*}

\section{Trees}\label{trees}

Fix a pattern $\sigma \in S_k$. Given a permutation $\pi \in S_n$ as input, our goal is to compute the pattern count $\#{\sigma}(\pi)$, also denoted $\#\sigma$ whenever no confusion arises. Simple $\tilde{O}(n)$ time algorithms are known for \#\emph{12}, and extend to larger increasing patterns \#\emph{12...k}. We first review these techniques, as they are relevant to our general approach.

The task of computing Kendall's $\tau$ rank correlation coefficient is essentially, after sorting the samples, counting either increasing or decreasing pairs in a permutation.
$$ \tau \;=\; \frac{\#\emph{12}-\#\emph{21}}{\tbinom{n}{2}} \;=\; 2 \cdot \frac{\#\emph{12}\,}{\tbinom{n}{2}} - 1 $$
A naive implementation iterates through every two entries and compares their values, which yields an answer in $O(n^2)$ steps.

Knight improved this to $O(n \log n)$ steps with a divide-and-conquer approach, motivated by merge sort~\cite{knight1966computer}. While performing the usual merge sort on a permutation, treating it as a list to be ordered, we record some numbers that add up to~\#\emph{12}. Namely, when we merge two sub-lists and select an element from the second one, we accumulate the number of elements selected so far from the first one. Every occurrence of \emph{12} contributes one to the result.

A different approach uses dynamic programming~\cite[ex 14.1-7]{cormen2009introduction}. We iterate once through $\pi$. At step $x$, we \emph{insert} the number 1 at position $y=\pi(x)$ of an array~$A$, and we add to the result its current \emph{prefix sum} of $y-1$ entries. For example, in step 4 of $\#\!\textit{12}(28176453)$, the array $A$ is updated $11000001 \to 11000011$ since $\pi(4)=7$, and the result increases by $2$, which is the sum of~$A[1]$...$A[6]$. 

We briefly describe the  \emph{sum tree} data structure~\cite{shiloach1982n2log}, that performs these two operations in logarithmic time, as it is simple and will serve us well later. Place the array entries at the leaves of a complete binary tree of depth $\lceil \log n \rceil$, from left to right. Insertion at $y$ is done by an increment to the $y$-th leaf and to all its ancestors. Every node hence always contains the sum of its children. The \emph{$y$-prefix}, i.e.~the first $y-1$ leaves, may be efficiently summed using left-siblings of $y$'s ancestors. 

In fact, the two presented methods add exactly the same numbers in a different order. We also remark that similar tree-based structures are used in~\cite{albert2001algorithms} for permutation pattern matching. 

Counting larger increasing patterns, such as \#\emph{123}, is similarly done with more sum trees~\cite{chaturvedi2018geeks}. Let the sum tree $A$ record the already-observed values as before. In the next sum tree $B$, we insert at $B[y]$ the number of \emph{12} such that $y$ corresponds to the~\emph{2}. For example, $B[5]=3$ for~$\pi=\underline2\,8\,\underline1\,7\,6\,\underline{4}\,\underline{\underline{5}}\,3$, where the three \emph{12}-occurrences with values $(*,5)$ are marked. We remark that the array~$B$ eventually sums to~\#\emph{12}, and that $\pi$ can be reconstructed from $B$, which is closely related to the \emph{inversion table} of~$\pi$~\cite[page 36]{stanley2011enumerative}, but these facts are not needed here.

Similar to the computation of \#\emph{12} by summing queries to~$A$, the computation of \#\emph{123} sums queries to~$B$. To sum up, at every step~$x$ the algorithm will

\begin{samepage}
\begin{itemize}
\item[(a)] insert one to $A[\pi(x)]$,
\item[(b)] insert the $\pi(x)$-prefix sum of $A$ to $B[\pi(x)]$,
\item[(c)] add the $\pi(x)$-prefix sum of $B$ to the result. 
\end{itemize}
\end{samepage}

An occurrence of \emph{123} at $i_1<i_2<i_3$ contributes one to $A[\pi(i_1)]$ at (a) of step~$i_1$, which we pass along to $B[\pi(i_2)]$ at (b) of step $i_2$, and to the result at (c) of step $i_3$. 

One can similarly compute  \#\emph{12...k} in $\tilde{O}(k^2n)$ time and space with $k-1$ sum trees. Every sum tree $A_m$ will hold counts of \emph{12...m} based on prefix sums of~$A_{m-1}$. Note that this algorithm can also yield \#\emph{k...21}, by reversing the input $\pi$.

\bigskip \noindent \textbf{Corner Trees} \medskip

The graph of an order-$n$ permutation $\pi$ in the $xy$-plane consists of $n$ points. In the above dynamic programming, one scans these points from left to right, and queries a $\pi(x)$-prefix sum in order to integrate some function over the points to the left and below $(x,\pi(x))$. From this perspective, the extension to other statistics will be done by looking at all four quadrants around a point, combining several of them together, and repeating recursively. These ideas motivate the following definitions.

\def\root{\raisebox{-4pt}{\bcrosevents}}

\begin{definition*}
A \emph{corner tree} is a rooted tree, whose vertices are labeled by 
$$\{\root,\; \texttt{SW},\; \texttt{SE},\; \texttt{NW},\; \texttt{NE}\}$$
where $\root$ is designated as the label for the root.
\end{definition*}

\begin{example*}
See Figure~\ref{atree}.
\end{example*}

\newcommand{\ct}[1]{\raisebox{-5pt}{\tikz{
\tikzstyle{vertex}=[circle, minimum size=16pt, inner sep=-10pt, draw];
\foreach \label/\x/\y/\parent in {#1} {
\node[vertex] (\x\y) at (\x,\y) {\texttt{\label}};
\draw[-] (\x\y)--(\parent);
}}}}

\begin{figure}[h]
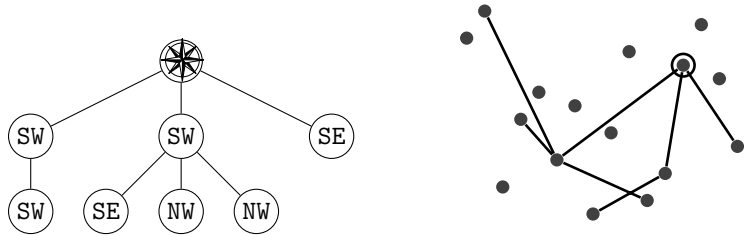

\centering
\ct{\root/5/5/55, SW/3/4/55, SW/5/4/55, SE/7/4/55, SE/4/3/54, NW/5/3/54, NW/6/3/54, SW/3/3/34} \;\;\;\;\;\;\;\;\;\;\;\;
\tikz[scale=0.2]{
\tikzstyle{vertex}=[circle, minimum size=5pt, inner sep=0pt, fill, darkgray];
\foreach \x/\y in {1/14,2/16,3/3,4/8,5/10,6/5,7/9,8/1,9/7,10/13,11/2,12/4,13/12,14/15,15/11,16/6} {
\node[vertex] (\x) at (\x*1.2,\y*0.9) {};}
\draw[-, line width=1pt] (11)--(6)--(13)--(12)--(8) (13)--(16) (2)--(6)--(4); \draw[-, line width=1pt] (13) circle (0.75);
}
\caption{A corner tree and its occurrence in a permutation}
\label{atree}
\end{figure}

\begin{definition*}
An \emph{occurrence} of a corner tree $T$ in a permutation $\pi \in S_n$ is a map 
$$i:\text{vertices}(T) \;\to\;\{1,\dots,n\}$$
such that the image of every vertex $v$ relative to its parent is compatible with its label in the following sense:
\begin{itemize}
\item[\texttt{*W }] 
$i(v) < i(v\mathrm{.parent})$
\item[\texttt{*E }] 
$i(v) > i(v\mathrm{.parent})$
\item[\texttt{S* }] 
$\pi(i(v)) < \pi(i(v\mathrm{.parent}))$ 
\item[\texttt{N* }] 
$\pi(i(v)) > \pi(i(v\mathrm{.parent}))$ 
\end{itemize}
\end{definition*}

\begin{remark*}
Here \texttt{*W} means that the label of $v$ is either \texttt{SW} or \texttt{NW}, and similarly in the other cases.
\end{remark*}

\begin{remark*}
In general, the map $i$ is not required to be injective.
\end{remark*}

\begin{definition*}
Let $\#T(\pi)$ be the \emph{count} of~$T$, which is the number of different occurrences of $T$ in~$\pi$. When $\pi$ is not important we write $\#T$. 
\end{definition*}

We give some examples of corner trees and their counts, showing that the latters are also pattern counts. This relation justifies our reuse of terminology.

\begin{example*}
Let $T=\;\ct{\root/1/0/10}\;$. Then $\#T(\pi)=|\pi|$, hence $\#T=\#\textit{1}\;$.
\end{example*}

\begin{example*}
Let $T'=\;\ct{\root/1/0/10,NE/2/0/10}\;$. Then $\#T'=\#\textit{12}\;$.
\end{example*}

Indeed, an occurrence of $T'$ is a map of its two vertices to two positions $i_1 < i_2$ such that $\pi(i_1) < \pi(i_2)$, which is an occurrence of the pattern \textit{12} in $\pi$. Similarly,
\begin{example*}
$\#\;\ct{\root/1/0/10,NE/2/0/10,NE/3/0/20}
\;\;=\;\;\#\textit{123} \;\;=\;\; \#\; \ct{SW/1/0/10,\root/2/0/10,NE/3/0/20} $
\end{example*}
Note that these two corner trees counting \emph{123} are essentially the same one with different choices of the root. The role of the root will become clear later, in the algorithm. Some corner trees yield combinations of pattern counts:
\begin{example*}
$ \#\;\ct{\root/1/0/10,SE/2/0/10,NE/3/0/20}
\;\;=\;\;\#\textit{213}\;+\;\#\textit{312} $
\end{example*}
Indeed, either \textit{213} or \textit{312} occur at $i_1 <i_2<i_3$ whenever \mbox{$\pi(i_1) > \pi(i_2) < \pi(i_3)$}. This kind of a combinations is called a \emph{partially ordered pattern} in~\cite{kitaev2005partially}, and was considered also in~\cite{hartung2020combinatorial}. Other corner trees lead to more involved linear combinations:
\begin{example*}
$\#\;\ct{\root/1/0/10,NE/2/0/10,SW/3/0/20}
\;\;=\;\;2\cdot\#\textit{123}\;+\;2\cdot\#\textit{213}\;+\;\#\textit{12}$
\end{example*}
Here we have considered all the ways the vertex images may be related, including the case that the root and the leaf \texttt{SW} map to the same entry. 

\begin{remark*}
It has been suggested by reviewers of this paper to replace the five vertex labels of a corner tree by the four labels $\{\swarrow, \searrow, \nwarrow, \nearrow\}$ on the \emph{edges}, and to name it a \emph{tree pattern}. This equivalent terminology may be useful in future studies, but here we stick to corner trees which are more suitable for our algorithm below. Another variation, hinted by the eight-wind compass rose label, is allowing also the vertex labels $\{$\texttt{N}, \texttt{S}, \texttt{E}, \texttt{W}$\}$. While our results extend to this setting, it would not yield new pattern statistics, only combinations of those given by corner trees.  
\end{remark*}

\medskip
In general, as we now show, the count of a corner tree is a linear combination of counts of patterns. We also give a formula to find this combination, which is particularly useful for our computer-generated results.

\begin{lemma}
\label{comb}
Let $T$ be a corner tree. Then there exists a finite list of integer coefficients $\{\widehat{\#}T(\sigma)\}_{\sigma \in \Sigma(T)}$ such that for every permutation $\pi \in \bigcup_{n=1}^{\infty} S_n$ 
$$ \#T(\pi) \;=\; \sum\limits_{\sigma \in \Sigma(T)} \widehat{\#}T(\sigma) \, \#{\sigma}(\pi) $$
Moreover, the coefficients $\widehat{\#}T(\sigma)$ are uniquely determined by the recursive formula
$$ \widehat{\#}T(\sigma) \;=\; \#T(\sigma) - \sum\limits_{|\rho| < |\sigma|} \widehat{\#}T(\rho) \, \#{\rho}(\sigma) $$
where the sum is over all patterns $\rho$ that are shorter than $\sigma$.
\end{lemma}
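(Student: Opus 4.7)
The plan is to construct the coefficients $\widehat{\#}T(\sigma)$ combinatorially, verify the linear decomposition directly, and then derive the recursion by specializing $\pi$. The key observation is that an occurrence $i \colon V(T) \to \{1,\ldots,n\}$ of $T$ in $\pi$ is not required to be injective, so its image is some subset $S = \{j_1 < \cdots < j_k\}$ with $k \leq |V(T)|$; moreover, all the corner-tree conditions compare positions and values only pairwise, so they depend only on the pattern $\sigma$ that $\pi$ induces on $S$.

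This motivates defining $\widehat{\#}T(\sigma)$ to be the number of \emph{surjective} maps $V(T) \to \{1,\ldots,|\sigma|\}$ that satisfy the corner-tree constraints with respect to $\sigma$ itself. Grouping the occurrences of $T$ in $\pi$ according to their image set then gives
$$\#T(\pi) \;=\; \sum_{\sigma} \widehat{\#}T(\sigma)\cdot \#\sigma(\pi),$$
because the $k$-subsets of $\{1,\ldots,n\}$ on which $\pi$ induces the pattern $\sigma$ are in bijection with the $\#\sigma(\pi)$ occurrences of $\sigma$ in $\pi$, and over each such subset the number of valid surjective corner-tree maps is $\widehat{\#}T(\sigma)$. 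Finiteness of $\Sigma(T)$ is automatic, since no surjection from $V(T)$ to $\{1,\ldots,k\}$ exists once $k > |V(T)|$; integrality is clear because $\widehat{\#}T(\sigma)$ simply counts a finite set of maps.

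The recursion is then obtained by substituting $\pi = \sigma$ into the identity above: since $\#\sigma(\sigma) = 1$ and $\#\rho(\sigma) = 0$ for every other $\rho$ with $|\rho| = |\sigma|$, isolating the leading term yields exactly the displayed formula. Uniqueness follows by strong induction on $|\sigma|$, as the recursion expresses $\widehat{\#}T(\sigma)$ in terms of values $\widehat{\#}T(\rho)$ for strictly shorter $\rho$, with the trivial base case $|\sigma|=1$. The main conceptual step is guessing the right definition of $\widehat{\#}T(\sigma)$, which is forced once one decides to partition corner-tree occurrences by their image set; the surjectivity requirement in the definition is precisely what ensures each occurrence of $T$ in $\pi$ is counted in a unique term of the decomposition. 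The remaining verifications are routine.
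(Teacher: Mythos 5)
Your proof is correct and follows essentially the same route as the paper's: you define $\widehat{\#}T(\sigma)$ as the count of surjective (onto) tree occurrences in $\sigma$, decompose general occurrences via their image set, and obtain the recursion by specializing $\pi = \sigma$. The only microscopic gap is that when specializing you only note $\#\rho(\sigma)=0$ for $\rho \neq \sigma$ with $|\rho|=|\sigma|$, whereas one should also observe (trivially) that $\#\rho(\sigma)=0$ for $|\rho|>|\sigma|$, so that the sum genuinely truncates to $|\rho|<|\sigma|$ plus the $\rho=\sigma$ term.
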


\begin{proof}
Let $\widehat{\#}T(\sigma)$ be the number of occurrences of $T$ in $\sigma$ which are onto the set of positions $\{1,\dots,|\sigma|\}$.

Every occurrence $i$ of a tree $T$ in a permutation $\pi$, defines a unique pattern $\sigma$ that occurs in $\pi$ at the image of $i$. In other words, we can uniquely represent the occurrence as $i = i'' \circ i'$ where $i'$ is an occurrence of $T$ onto~$\sigma$ and $i''$ is an occurrence of $\sigma$ in~$\pi$. Conversely, any such composition, of occurrences of $T$ onto~$\sigma$ and $\sigma$ in~$\pi$, gives rise to an occurrence of $T$ in $\pi$. The lemma now follows by dividing into cases according to $\sigma$.

Clearly $|\sigma| \leq |T|$, where $|\sigma|$ is the length of the pattern, and $|T|$ is the number of vertices in the tree. Thus the set of potential patterns $\Sigma(T)$ is finite and does not depend on $\pi$.

The formula for the coefficients follows by renaming the variables $\pi,\sigma$ as $\sigma,\rho$ and noting that for $|\rho|\geq|\sigma|$, we have $\#\rho(\sigma)=1$ if $\rho=\sigma$ and $0$ otherwise.
\end{proof}

In the other direction, counting a pattern \emph{may} reduce to counting corner trees. By the above examples, the pattern count \#\textit{12} is also a corner tree count, and same for \#\textit{123} and all increasing patterns. Since $2\,\#\textit{123}+2\,\#\textit{213}+\#\textit{12}$ is also a corner-tree count, one can solve for \#\textit{213} and express it as a combination of three corner-tree counts. 

\begin{remark*}
Lemma~\ref{comb} lets us work with finite combinations of pattern counts and tree counts, and derive relations that hold for any input of any length. To emphasize this point, we may identify formal sums of trees with formal sums of patterns. 

Let $V$ be the linear space of all formal linear combinations of patterns, that is, expressions of the form $A = \sum_{\sigma} a_{\sigma} \sigma$ where $a_{\sigma} \in \mathbb{Q}$. Denote the resulting permutation statistics $\#A = \sum_{\sigma} a_{\sigma} \#\sigma$. By Lemma~\ref{comb} every corner tree $T$ is an element of~$V$. A \emph{corner tree formula} is a combination of the form $\sum_T c_T T$, and we are interested in representing patterns by such formulas. Clearly, corner tree formulas form a subspace $U$ of~$V$. 

We remark without going into details that one can also define a multiplication that satisfies $\#(A \cdot B) = \#A\,\#B$, thereby giving both $U$ and $V$ a commutative ring structure. The unit would be the empty tree/pattern~$\phi$, with $\#\phi \equiv 1$, a constant permutation statistics. Such a definition of $A \cdot B \in U$ given $A,B \in U$ shows that sums of corner trees are not more restrictive than expressions that involve products of tree counts. Therefore, we do not pursue this idea here.  
\end{remark*}

\bigskip \noindent \textbf{Proof of Theorem~\ref{main}} \medskip

We now show that counting a corner tree in a permutation can be done in nearly linear time in its length. The pseudo-code to compute $\#T(\pi)$ is presented in Algorithm~\ref{alg}. The input to the algorithm is the permutation $\pi$, the corner tree~$T$, and a vertex~$v \in T$. 

If $v$ is the root, then the output will be~$\#T(\pi)$. Otherwise, for reasons explained below, the output will be an array of counts $C$, so that $C[x]$ is the number of valid maps of the subtree at $v$, given that $v$'s parent is sent to the position~$x$.

\begin{algorithm}[t]
\caption{Count a corner tree}
\linespread{1.25}\selectfont
\label{alg}
\begin{algorithmic}[1]
\Function{count-ct}{permutation $\pi$, tree $T$, vertex $v$}
\State $A \leftarrow \textsc{array}(1,\dots,1)$
\Comment{of size $n=|\pi|$}
\For{$w$ \textbf{in} $v$.children}
\State $A \leftarrow A \,\odot$ \textsc{count-ct}$(\pi, T, w)$
\Comment{term-wise product of arrays}
\EndFor
\If{$v\text{.label} = \root$}
\State \Return \textsc{sum}$(A)$
\Comment{of the $n$ values}
\EndIf
\State $C \leftarrow \textsc{array}(0,\dots,0)$
\Comment{size $n$}
\State $B \leftarrow \textsc{sum-tree}(0,\dots,0)$
\Comment{size $n$}
\For{$x$ \textbf{in} $\begin{cases} (1,\dots,n) & \text{if }v\text{.label} = \texttt{*W} \\ (n,\dots,1) & \text{if }v\text{.label} = \texttt{*E} \end{cases}$ \;\;\;}
\Comment{two cases}
\State $C[x] \leftarrow \begin{cases} B.\text{prefix-sum}(\pi[x]) & \text{if }v\text{.label} = \texttt{S*} \\ B.\text{suffix-sum}(\pi[x]) & \text{if }v\text{.label} = \texttt{N*} \end{cases} $
\Comment{two cases}
\State $B[\pi[x]] \leftarrow A[x] $
\EndFor
\State
\Return $C$
\Comment{the whole array}
\EndFunction
\end{algorithmic}
\end{algorithm}

Lines 2-4 produce an array $A$, such that $A[x]$ is the number of occurrences of the subtree rooted at $v$, that map $v$ to~$x$. This number is computed for every $x$ as a product over $v$'s children, as we have to choose valid maps for their subtrees. We thus proceed by recursion on the tree, and call the function once for every child. Note that if $v$ is a leaf then there is a single such map and $A[x]=1$ for every~$x$. If $v$ is the root, then we return the total count, on lines~5-6, as it does not matter where it maps.

Otherwise, we create and return an array of counts $C$ as above. Every entry of~$C$ is a sum of entries of $A$, such that their relation in $\pi$ is compatible with $v$'s label. For example, $C[x]$~will be the sum of $A[x']$ with $x'<x$ and $\pi(x')<\pi(x)$ in the case that $v$'s label is~\texttt{SW}. The other three cases are similar. To avoid a double loop and save time, we compute all these sums in one iteration over $A$, by inserting it into a sum-tree array~$B$, from which we properly retrieve $C[x]$. 

Note, for example, that this algorithm reduces to the above method for counting increasing patterns~$12 \dots k$, in the case that the tree~$T$ is a path of~\texttt{SW}.

In conclusion, Algorithm~\ref{alg} correctly counts corner trees. Overall, the function is called $k=|T|$ times. The arrays $A$, $B$, $C$ hold $O(n)$ numbers. On lines 4, 6, 9 there are loops of $n$ iterations. Thanks to sum trees, lines 10 and 11 are performed with at most $\log n$ arithmetic operations each. All the numbers involved require at most $k \log (n+1)$ bits. Thus for a $k$-vertex corner tree the function runs in $\tilde{O}(k^2n)$ time. It follows that any fixed permutation statistic expressible by corner trees can be computed in $\tilde{O}(n)$ time, and this proves Theorem~\ref{main}.

\section{Small Patterns}\label{small}

Before using corner trees to count 3-patterns and 4-patterns, we mention some well-known symmetries that reduce the number of cases. Let $\mathrm{rev}:S_n \to S_n$ be the reverse map, $\mathrm{rev}(\pi(1)\dots\pi(n)) = \pi(n)\dots\pi(1)$. Also, let $\mathrm{inv}(\pi) = \pi^{-1}$. These two transformations act as reflections on the graph of the permutation in the plane, and generate~$D_4$, the order-eight dihedral group of symmetries of the square. 

Clearly these eight symmetries respect pattern occurrence, in the sense that $\#[\mathrm{rev}\,\sigma](\mathrm{rev}\,\pi) = \#{\sigma}(\pi)$ for example. Thus we obtain linear time reductions between counting $\sigma$ and any other pattern in its orbit under the action of $D_4$. We may also apply these symmetries to corner trees, where $\textrm{rev}$ swaps \texttt{W} and~\texttt{E}, and $\textrm{inv}$ swaps \texttt{NW} and~\texttt{SE}. This yields direct reductions between corner tree formulas. 

For example, the problem of computing \#\emph{321} is equivalent to \#\emph{123}. Similarly, \#\emph{132}, \#\emph{231}, \#\emph{213} and \#\emph{312} are pairwise equivalent as rotations of each other. In Section~\ref{trees} we have seen that \#\emph{123} and \#\emph{213} are computable with corner trees. It follows that every 3-pattern can be counted in $\tilde{O}(n)$ time, as claimed by Corollary~\ref{three}. 

The two cases \emph{123} and \emph{213} extend to \emph{123...k} and \emph{213...k} for every~$k$. For example, we have the following formulas for counting 4-patterns.

\begin{align*}
\#\textit{1234} \;&=\; \;\#\; \ct{\root/1/0/10,NE/2/0/10,NE/3/0/20,NE/4/0/30} \\[-1.5em]
\#\textit{2134} \;&=\; \;-\; \#\textit{1234} \;-\; \tfrac12\; \#\textit{123} \;+\; \tfrac12 \;\#\; \ct{\root/1/1/11,SW/2/1/11,SW/3/1/21,SW/3/2/21} 
\end{align*}

\smallskip
We can give another corner tree formula for a 4-pattern, which is remarkably more complicated. Let
\begin{align*}
& \\[-1.5em]
A\;\;&=\;\;
\ct{\root/1/0/10,SW/2/0/10,SE/3/0/20,SE/2/1/10} \;\;-\; \ct{\root/1/0/10,SE/2/0/10,SW/3/0/20,SE/2/1/10} \;\;+\; \ct{\root/1/0/10,SE/2/0/10,NE/3/0/20,SE/2/1/10} \\[1.5em]
\;&-2\; \ct{\root/1/0/10,SW/2/0/10,SE/3/0/20,SE/4/0/30} \;\;-2\; \ct{\root/1/0/10,SW/2/0/10,SE/3/0/20,SW/4/0/30} \;\;-\; \ct{\root/1/0/10,SW/2/0/10,SE/3/0/20} \\[1em]
\;&+\frac13\; \ct{\root/1/0/10,NE/2/0/10,NE/1/1/10,NE/2/1/10} \;\;-\; \ct{\root/1/0/10,NE/2/0/10,NE/3/0/20,NE/3/1/20} \;\;-\frac12\; \ct{\root/1/0/10,NE/2/0/10,NE/2/1/10} \;\;+\frac16\; \ct{\root/1/0/10,NE/2/0/10} \\[-1em] &
\end{align*}

\begin{lemma}
$\#A \;=\; \#\textit{2143}$
\end{lemma}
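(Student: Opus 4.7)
The plan is to view both sides of the claimed identity as formal linear combinations of pattern counts and compare coefficients in that basis. Each of the nine corner trees $T_1,\dots,T_9$ appearing in $A$ has at most four vertices, so Lemma~\ref{comb} gives a decomposition $\#T_i(\pi) = \sum_{|\sigma|\le 4} \widehat{\#}T_i(\sigma)\,\#\sigma(\pi)$ valid on every $\pi$. Thus $\#A(\pi)$ is itself a linear combination of the pattern counts $\#\sigma(\pi)$ with $|\sigma| \le 4$, and so is $\#\textit{2143}(\pi)$. Since the statistics $\{\#\sigma : |\sigma| \le 4\}$ are linearly independent as functions on $\bigsqcup_n S_n$ (their transition matrix with respect to evaluation on $\bigsqcup_{n\le 4} S_n$ is upper-triangular and invertible by the same recursion as in Lemma~\ref{comb}), the identity is equivalent to matching coefficients.

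The concrete strategy I would carry out is to evaluate $\#A(\pi)$ on every $\pi$ with $|\pi|\le 4$ and check that it equals $[\pi = 2143]$. That is 33 permutations, and for each one I tabulate the nine integers $\#T_i(\pi)$, multiply by the rational coefficients shown in the definition of $A$, and sum. The cases $|\pi|\le 3$ must all give zero, as must every $\pi \in S_4$ except $\pi=2143$, which must give $1$. Equivalently, one may instead compute each $\widehat{\#}T_i(\sigma)$ directly as the number of occurrences of $T_i$ in $\sigma$ that are \emph{onto} $\{1,\dots,|\sigma|\}$, for every $\sigma$ with $|\sigma| \le 4$, and verify the single linear identity $\sum_i c_i \widehat{\#}T_i(\sigma) = [\sigma = 2143]$ term by term.

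The main obstacle is the bookkeeping of non-injective occurrences. Per the definition of a corner tree occurrence, the map $i:\mathrm{vertices}(T)\to\{1,\dots,|\pi|\}$ need \emph{not} be injective, so several of the trees in $A$, such as $\ct{\root/1/0/10,SW/2/0/10,SE/3/0/20,SE/2/1/10}$ and $\ct{\root/1/0/10,NE/2/0/10,NE/1/1/10,NE/2/1/10}$, pick up contributions from collisions between sibling vertices or between a vertex and its grandparent. The rational coefficients $\frac13,\ -\frac12,\ \frac16$ and the signed copies of smaller trees appearing at the end of $A$ exist precisely to subtract off these coincidence contributions and to cancel the unwanted patterns (in particular $\textit{1234}$, $\textit{2134}$, $\textit{2341}$, $\textit{3142}$ and the various rotations of $\textit{1324}$) that each single tree would otherwise produce alongside $\textit{2143}$.

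Once the enumeration is performed, the remainder of the proof is a routine linear combination, and the cleanest way to perform the tabulation is with the computer package~\cite{even2019corners}; the same code that searches for the formula also certifies it by computing the coefficients $\widehat{\#}T_i(\sigma)$ using the recursion of Lemma~\ref{comb}. The purpose of the lemma is to exhibit the existence and explicit shape of a corner tree formula for the 4-pattern $\textit{2143}$, rather than to suggest a clever closed-form derivation.
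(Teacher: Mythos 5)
Your approach is correct and matches the paper's in spirit: both reduce the identity to a finite verification using Lemma~\ref{comb}, and both treat that verification as routine. The paper carries it out by hand, grouping the length-$\le 4$ patterns to show how each spurious contribution from $T_1$ (including the non-injective collisions you correctly flag, e.g.\ from $T_7$, where a child and grandchild may map to the same position) is cancelled by later terms in $A$; you instead propose to mechanize the same check by either tabulating the coefficients $\widehat{\#}T_i(\sigma)$ for $|\sigma| \le 4$ or evaluating $\#A(\pi)$ for all $|\pi|\le 4$. Your linear-independence observation (upper-triangular transition to evaluation on small permutations) is a fine justification for why checking on $|\pi|\le 4$ suffices, though the paper already has this baked into Lemma~\ref{comb}. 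One small slip: there are \emph{ten} corner trees in the formula for $A$, not nine -- the paper labels them $T_1,\dots,T_{10}$, with $T_{10}$ being the two-vertex tree with coefficient $\tfrac16$. This does not affect the validity of the strategy, only the bookkeeping.
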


\noindent
\emph{Proof.}
This follows by a routine inspection of pattern occurrences counted by these trees. Denote the trees in $A$ by $T_1,\dots,T_{10}$ in order of appearance. The desired pattern \emph{2143} is counted once by $T_1$ and nowhere else. Every other pattern is cancelled out when summing all the terms. We include a brief account for the interested reader.

\begin{enumerate}
\item[(a)]
\emph{2413, 2431} are also counted by $T_1$, but subtracted back by~$T_2$.
\item[(b)]
\emph{4123, 4132, 4213, 4231, 4312, 312} only appear in $T_2$ and $T_3$, the same number of times with opposite signs.
\item[(c)] 
\emph{3124, 3142, 3214, 3241, 3412, 3421} are counted twice among $T_1$ and $T_3$, and \emph{231, 213} counted once. They are cancelled out by $T_4$, $T_5$ and~$T_6$.
\item[(d)] 
\emph{1324, 1342, 1423, 1432, 132} are added twice by $T_7$ to compensate for their unnecessary subtraction by $T_2$ or~$T_5$.
\item[(e)] 
\emph{1234, 1243, 123, 132, 12} appear in~$T_7$ and are balanced by $T_8$, $T_9$, $T_{10}$.\qed
\end{enumerate}

In conclusion, we have presented nearly linear time algorithms that count all the patterns from three out of the seven $D_4$-orbits in $S_4$: $\{\textit{1234}$, $\textit{4321}\}$,  $\{\textit{1243}$, $\textit{2134}$, $\textit{3421}$, $\textit{4312}\}$,  $\{\textit{2143}$, $\textit{3412}\}$. This proves Corollary~\ref{the-eight}.

\begin{remark*}
It was noted by a reviewer that, by known properties of these three orbits \cite[A005802]{oeis}, all the eight resulting algorithms return zero on the same number of order-$n$ inputs, for every~$n$.
\end{remark*}

\begin{remark*}
One way to describe the eight patterns in Corollary~\ref{the-eight} is as monotone $\textit{1234}$ or $\textit{4321}$, with possible adjacent transpositions at the two ends. The same procedure gives eight patterns for any~$k \geq 4$. Letting $k=6$ for example, these patterns are: 
$$ \textit{123456},\; \textit{123465},\; \textit{213456},\; \textit{213465},\; \textit{564312},\; \textit{564321},\; \textit{654312},\; \textit{654321} $$
We note that these patterns are counted by corner trees as well. For $k \geq 5$ this is done by a similar technique to $\textit{1234}$ and $\textit{2134}$ above. 
\end{remark*}

\bigskip \noindent \textbf{The Bergsma--Dassios-Yanagimoto Statistic} \medskip

Let $(X_1,Y_1),\dots,(X_k,Y_k)$ be independent samples in the real plane, drawn from a common bivariate distribution $(X,Y)$ with non-atomic marginals. Recall that such samples induce, with probability one, a random permutation $\rho \in S_k$ such that $\textrm{rank}\,Y_1 = \rho(\textrm{rank}\,X_1),\;\textrm{rank}\,Y_2 = \rho(\textrm{rank}\,X_2),\;\dots,\;\textrm{rank}\,Y_k = \rho(\textrm{rank}\,X_k)$. Equivalently, the samples can be written as $(X_{(1)},Y_{(\rho(1))}),\dots,(X_{(k)},Y_{(\rho(k))})$ with the usual order statistics notation, $X_{(1)} < X_{(2)} < \dots < X_{(k)}$. As noted above in the introduction, many nonparametric measures of bivariate data can be described in terms of this random permutation.

In a typical statistical test, the null hypothesis states that the two variables $X$ and $Y$ are independent. Under this assumption, it is easy to see that for all $k \in \mathbb{N}$, 
$$ (\star) \;\;\;\;\;\;\;\; \;\;\;\;\;\;\;\; \forall \; \sigma \in S_k \;\;\;\;\;\;\;\; P(\rho = \sigma) \;=\; \tfrac{1}{k!} \;\;\;\;\;\;\;\; $$
A classical result by Hoeffding gives a strong converse~\cite{hoeffding1948non}. If $(\star)$ holds for some $k \geq 5$ with $X$ and $Y$ jointly absolutely continuous, then they are independent. Indeed, the well-known nonparametric independence test by Hoeffding, which can be expressed with counts of 5-pattern, is consistent against all alternative joint densities.

Hoeffding's test has been extended in later works to variables in $\mathbb{R}^d$, to three or more random variables, and in other directions. Some variants that relax the continuity assumption have been studied. Yanagimoto has shown that independence actually follows from $(\star)$ with $k=4$, and this is optimal since $(\star)$ with $k=3$ is not sufficient~\cite{yanagimoto1970measures}. Bergsma and Dassios have suggested a corresponding nonparametric independence test, based on 4-patterns, which has gained popularity in recent years~\cite{bergsma2010nonparametric,bergsma2014consistent}. Up to normalization, the Bergsma--Dassios--Yanagimoto statistic has the following simple form:
$$ T^{\star} \;=\; \#\textit{1234} + \#\textit{1243} + \#\textit{2134} + \#\textit{2143} + \#\textit{3412} + \#\textit{3421} + \#\textit{4312} + \#\textit{4321} $$
Under independence, $T^{\star}/\tbinom{n}{4}$ is concentrated at $1/3$, while for any other bivariate distribution with continuous marginals these eight patterns appear at a strictly higher overall rate~\cite{yanagimoto1970measures, bergsma2010consistent, drton2018high}. Natural interpretations of this statistical measure have been discussed in recent works~\cite{bergsma2014consistent,weihs2018symmetric,even2018patterns}.

We remark that the relation between independence and pattern frequencies has been independently discovered in the theory of so-called \emph{permutons}, limit objects for permutations. In that context, it means that quasi-randomness of permutations is ``finitely forcible''~\cite{kral2013quasirandom, glebov2015finitely}.

\medskip
A straightforward evaluation of $T^{\star}$ requires $\tilde{O}(n^4)$ time, though it was noted right away that this can be reduced to $\tilde{O}(n^3)$~\cite{bergsma2010consistent}. Therefore, it was suggested at first to approximate this statistic by a sufficiently large random sample of subsets of four entries~\cite[Sect.~4]{bergsma2014consistent}. This method is applicable, but due to the degenerate nature of the null distribution~\cite{nandy2016large,dhar2016study,even2018patterns} it incurs a substantial loss of information even if as much as $O(n^2)$ random samples are used~\cite{janson1984asymptotic}. Two recent papers have been dedicated to an exact computation of~$T^{\star}$ in $\tilde{O}(n^2)$ time~\cite{weihs2016efficient, heller2016computing}. These methods might still be infeasible in big data applications, where only near linear-time work is considered practical.

Corollary~\ref{bd}, stating that indeed the Bergsma--Dassios--Yanagimoto statistic can be evaluated in $\tilde{O}(n)$ time, is immediate from Corollary~\ref{the-eight}. Nevertheless, we derive a more direct formula, based on the invariance of $T^{\star}$ under the symmetries mentioned above. Let the action of the dihedral group linearly extend from patterns and trees to their formal sums, so that $\#S(\pi) = \#[g\cdot S](g \cdot \pi)$ for a corner tree formula $S$ and a reflection or a rotation $g \in D_4$.

\begin{proposition}
\label{bdsym}
The Bergsma--Dassios--Yanagimoto permutation statistic is obtained from averaging the corner tree formula,
$$ S\;\;=\;\;2\;\;  \ct{\root/2/2/22,\emph{SE}/3/2/22,\emph{NE}/2/3/22,\emph{NE}/3/3/22} \;\;+2\;\; \ct{\root/2/2/22,\emph{NE}/2/3/22,\emph{SE}/3/2/23,\emph{NE}/3/3/32} \;\;-2\;\; \ct{\emph{NE}/2/2/22,\root/2/1/22,\emph{SE}/3/1/21,\emph{NE}/3/2/31} \;\;-\;\; \ct{\root/2/2/22,\emph{SE}/3/2/22,\emph{NE}/3/3/22}$$
over all eight reflections and rotations, as follows:
$$ T^{\star}(\pi) \;=\; \binom{|\pi|}{4} - \frac18 \sum\limits_{g \in D_4} \#[g \cdot S](\pi) $$
\end{proposition}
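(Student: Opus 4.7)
The plan is to interpret both sides of the claimed identity as elements of the space $V$ of formal linear combinations of pattern counts, verify that both are $D_4$-invariant, and then reduce the identity to a finite check on representatives of $D_4$-orbits of short permutations.

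First, observe that $T^\star$ is itself $D_4$-invariant: the eight patterns in its definition split into the three $D_4$-orbits $\{\textit{1234},\textit{4321}\}$, $\{\textit{1243},\textit{2134},\textit{3421},\textit{4312}\}$, and $\{\textit{2143},\textit{3412}\}$ already identified in Section~\ref{small}. The symmetrized sum $\sum_{g\in D_4}\#[g\cdot S]$ is $D_4$-invariant by construction, and $\binom{|\pi|}{4}$ depends only on $|\pi|$, so the right-hand side of the proposed identity is $D_4$-invariant as well. This is the basic compatibility that makes such an averaged formula possible at all.

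Next, both sides live in the span of pattern counts $\#\sigma$ with $|\sigma|\le 4$: the left-hand side by definition, and the right-hand side because each of the four corner trees appearing in $S$ has at most four vertices, so Lemma~\ref{comb} expands $\#S$ — and hence every $\#[g\cdot S]$ — as a $\mathbb{Q}$-linear combination of pattern counts of size at most $4$. By the uniqueness part of Lemma~\ref{comb}, two such combinations induce the same permutation statistic if and only if they agree on every permutation of size at most $4$. It therefore suffices to verify
$$ T^\star(\pi_0) \;=\; \binom{|\pi_0|}{4} \,-\, \tfrac{1}{8}\sum_{g\in D_4}\#[g\cdot S](\pi_0) $$
for every $\pi_0\in S_1\cup S_2\cup S_3\cup S_4$. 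Thanks to $D_4$-invariance of both sides, the check collapses onto one representative per $D_4$-orbit: there are exactly $1+1+2+7=11$ such orbits, so the proof reduces to eleven numerical evaluations. For the four representatives with $|\pi_0|<4$ both sides must vanish; for the seven representatives with $|\pi_0|=4$ one needs $\sum_{g}\#[g\cdot S](\pi_0)=0$ on the three BDY orbits and $\sum_{g}\#[g\cdot S](\pi_0)=8$ on the four remaining orbits.

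The main obstacle is precisely this verification — not because it is conceptually deep, but because it requires evaluating the four corner trees $T_1,\dots,T_4$ (with coefficients $2,2,-2,-1$) at each of the eleven representatives $\pi_0$, and then summing over the eight symmetries. Each such evaluation is a finite enumeration of vertex maps from a tree with at most four nodes into a permutation of size at most four, which one can tabulate by hand in a compact table or confirm immediately using the corner-tree implementation of~\cite{even2019corners}. With that table in place, the eleven equalities hold on the nose, and the proposition follows.
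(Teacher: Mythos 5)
Your approach is correct, but it is a genuinely different route from the paper's. The paper proves the proposition constructively: it expands $S$ into a formal sum of 3- and 4-patterns via Lemma~\ref{comb}, sorts the resulting terms by $D_4$-orbit, averages to obtain an explicit formula $S^\star = \tfrac18\sum_{g\in D_4}[g\cdot S]$ equal to the sum of the sixteen 4-patterns not appearing in $T^\star$, and then concludes from the identity $\sum_{\sigma\in S_4}\#\sigma = \tbinom{n}{4}$. You instead argue abstractly: both sides of the claimed identity lie in the span of pattern counts of size $\le 4$, which by (the uniqueness part of) Lemma~\ref{comb} is a free module whose elements are determined by their values on $\bigcup_{k\le 4}S_k$; since both sides are manifestly $D_4$-invariant, the verification collapses to the $1+1+2+7 = 11$ $D_4$-orbit representatives. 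This is a clean ``reduce to a finite check'' argument and its logic is sound; the orbit count and the target values ($0$ for the three BDY orbits, $8$ for the remaining four orbits of $S_4$, $0$ for orders $<4$) are all computed correctly. The trade-off: the paper's symbolic expansion exposes the content of $S^\star$ and makes plain \emph{why} the four listed orbits aggregate to the complement of $T^\star$, whereas your reduction is shorter to state but opaque as to why the formula works, and it defers the actual eleven evaluations to a table or to the accompanying software; you should at least record that table to have a complete proof rather than a proof scheme. Both proofs ultimately rest on a comparable amount of routine tree-in-permutation counting.
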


\begin{proof}
This is a straightforward calculation. In spirit of the discussion following Lemma~\ref{comb}, expand $S$ in patterns:
\begin{align*}
S \;&=\; 2(2\cdot\textit{2134} + 2\cdot\textit{2143} + 2\cdot\textit{2314} + 2\cdot\textit{2341} + 2\cdot\textit{2413} + 2\cdot\textit{2431} + \textit{213} + \textit{231}) \\
&+\; 2(\textit{1324} + \textit{1423} + \textit{2314} + \textit{2413} + \textit{3412}) \\
&-\; 2(2\cdot\textit{2134} + 2\cdot\textit{2143} + \textit{2314} +  \textit{2413} + \textit{3124} + \textit{3142} + \textit{3412} + \textit{213}) \\
&-\; (\textit{213}+\textit{231})
\end{align*}
Sort the patterns into their $D_4$ orbits:
\begin{align*}
S&\;=\; (\textit{231}-\textit{213}) \\
&+\; 2(\textit{1324}) + 4(\textit{2341}) + 2(2\cdot\textit{2413}-\textit{3142}) + 8(\tfrac14\textit{1423}+\tfrac12\textit{2314}+\tfrac12\textit{2431}-\tfrac14\textit{3124})
\end{align*}
Let $S^{\star} = \frac18\sum_{g\in D_4}[g \cdot S]$. Then $(\textit{231}-\textit{213})$ vanishes, and the other orbits equalize: 
\begin{align*}
S^{\star} \;&=\; (\textit{1324}+\textit{4231}) \;+\; (\textit{1432}+\textit{2341} + \textit{3214} + \textit{4123}) \;+\; (\textit{2413} + \textit{3142}) \\
\;&+\; (\textit{1342} + \textit{1423} + \textit{2314} + \textit{2431} +\textit{3124} + \textit{3241}+\textit{4132} + \textit{4213})
\end{align*} 
The proposition now follows by the definition of $T^\star$, noting that the total count of all twenty-four order-$4$ patterns in an order-$n$ permutation is $\tbinom{n}{4}$. 
\end{proof}

\begin{remark*}
The symmetrized corner tree formula $S^{\star}$ has 20 nonequivalent terms. To simplify the computation, one can equivalently write $\#S(g^{-1} \cdot \pi)$ in the proposition instead of $\#[g \cdot S](\pi)$, and so always deal with the four corner trees in~$S$. Note that these trees have some overlapping branches, which may be exploited by a dedicated algorithm in practical computations. We provide implementations of these different approaches in the python package, see details below.
\end{remark*}


\bigskip \noindent \textbf{Corner Tree Spaces} \medskip

As noted after Lemma~\ref{comb}, formal sums of corner trees constitute a subspace of formal sums of patterns. This means that certain linear combinations of pattern counts may be computable by corner trees, even if individual patterns in those combinations are not so. 

For example, we have seen eight order-4 patterns, out of 24, that are contained in the subspace spanned by order-4 corner trees. However, a computer-assisted calculation shows that this subspace is actually 23-dimensional, proving Proposition~\ref{23}. This means that the 4-profile can be computed in near linear time up to only one freedom degree, an unknown multiple of one particular vector. 

What is the missing element? Endow the space of formal sums of 4-patterns with the unique inner product that makes the single 4-patterns an orthonormal basis. One can solve for a combination perpendicular to the pattern expansions of corner trees as above, and obtain:
\begin{align*}
N \;&=\; (\textit{1324}+\textit{4231}) \;+\; (\textit{1432}+\textit{2341} + \textit{3214} + \textit{4123}) \;+\; (\textit{2413} + \textit{3142}) \\[0.2em]
&-\; (\textit{1342} + \textit{1423} + \textit{2314} + \textit{2431} +\textit{3124} + \textit{3241}+\textit{4132} + \textit{4213})
\end{align*} 
Therefore, the subspace of linear combinations of 4-patterns for which we have corner tree formulas is characterized as~$N^{\perp}$, with respect to the above inner product.

\begin{remark*}
Here is an application of these computations. In the setting of Hoeffding's test, one may look for other independence-detecting sums $F=$ \raisebox{0.1em}{$\sum_{\sigma \in S_k}$}$f_{\sigma} \#\sigma$, where $E[F/\tbinom{n}{k}] = \alpha$ assuming independence, and $E[F/\tbinom{n}{k}] > \alpha$ under all dependent alternatives. Such combinations form a convex cone, studied for 4-patterns in a recent paper~\cite{chan2019characterization}. Ten so-called $\Sigma$-forcing combinations with such a property are provided there, including the Bergsma--Dassios--Yanagimoto statistic. We note that six out of the ten are contained in the subspace of corner trees, and hence can be computed efficiently.
\end{remark*}

We conclude our discussion of corner tree formulas with some brief details on their implementation. The Python package \texttt{cornertree} accompanies this article~\cite{even2019corners}. It includes the general-purpose classes \texttt{SumTree}, \texttt{CornerTree}, and \texttt{CornerTreeFormula}, as well as an implementation of Algorithm~\ref{alg}, with a plenty of examples. The function \texttt{Tstar} performs an optimized computation of the Bergsma--Dassios--Yanagimoto statistic. The iterator \texttt{canonical\_corner\_trees} generates all corner trees of a given order, and the routine \texttt{expand\_corner\_trees} expands them as formal sums in pattern. Proposition~\ref{23} and the other observations in the previous paragraphs follow from the output by standard linear algebra.

\medskip
The same procedure shows that corner trees span a 100-dimensional subspace of 5-patterns, a 463-dimensional subspace of 6-patterns, and a 2323-dimensional subspace of 7-patterns. The latter computation required several hours of CPU time.

\section{Faster 4-Patterns}\label{four}

We turn to explore other methods for counting patterns, beyond those provided by corner trees. This section focuses on the intriguing problem of computing the full 4-profile. Proposition~\ref{23} asserts that corner trees yield 23 independent linear functionals of this 24-dimensional vector. Therefore, counting any 4-pattern that is not already spanned by corner trees will accomplish the computation.

Here is a simple $\tilde{O}(n^2)$ solution, achieved by counting \emph{3214} in a given $\pi \in S_n$. Let $i_4 \in \{1,\dots,n\}$ be a position that will correspond to \emph{4}, the last entry in an occurrence of \emph{3214}. We count $i_1,i_2,i_3$ that complete an occurrence. This is done by restricting $\pi$ to smaller positions $x<i_4$ with smaller values~$\pi(x)<\pi(i_4)$, and counting \emph{321} exactly as described in the beginning of Section~\ref{trees}. To obtain \#\emph{3214} we sum over $i_4$, adding up $n$ numbers, each computed in $\tilde{O}(n)$ time.

\bigskip \noindent \textbf{Faster \#\emph{3214}}
\medskip

The above method gives rise to much overlapping work, as the same occurrences of \emph{321} are counted over and over, for different values of $i_4$. Our approach is to combine these tasks together for adjacent $i_4$ and adjacent $\pi(i_4)$. 

We thus divide $\{1,\dots,n\}$ into chunks of size $m$, where $m$ is a parameter to be suitably chosen later. We define two properties for an occurrence of~\emph{3214} at positions~$i_1,i_2,i_3,i_4$:
\begin{itemize}
\item[(A)]
$\pi(i_1) \leq am < \pi(i_4)$ for some integer $a$.
\item[(B)]
$i_3 \leq bm < i_4$ for some integer $b$.
\end{itemize}
To describe the properties geometrically, divide the graph of $\pi$ in the plane into horizontal and vertical strips of width~$m$. Property~A asserts that the \emph{3} and the \emph{4} of the \emph{3214} are contained in different horizontal strips, and property~B similarly asserts that the \emph{1} and the \emph{4} are in different vertical strips.

We first count type~A occurrences in $\tilde{O}(n^2/m)$ time, as follows. For every integer $a < n/m$, we scan $\pi$ once, for the purpose of counting occurrences with \emph{4} in the $a$-th horizontal strip. Our action at position~$x$ depends on the value of~$\pi(x)$. If $\pi(x) > (a+1)m$ then we do nothing. If $\pi(x) \leq am$ then we use it to count $\emph{321}$ with two sum-trees as above. If $\pi(x) \in [am+1,(a+1)m]$, then we add the number of \emph{321} counted so far to the total count of~\emph{3214}.

By applying the same method to the transposed permutation $\pi^{-1}$, we also count occurrences of \emph{3214} that satisfy~B. 

Occurrences that satisfy both A and~B have now been counted twice, and need to be subtracted. We hence make the following adjustments to the counting of type~A. Whenever $i$ reaches a multiple of $m$, we save the count of \emph{321} at that time. Upon observing $\pi(i)$ in the $a$-th strip, we add the current count of \emph{321} as before, but subtract the saved count from the previous $m$-multiple. Thereby, we subtract all occurrences of type~A$\cap$B, compensating for their double counting.

It is left to deal with occurrences of \emph{3214} satisfying neither A nor~B. Given any position $i_4$, there are at most $m$ relevant $i_3 < i_4$ in the same vertical strip, and at most $m$ relevant $i_1$ with $\pi(i_1)<\pi(i_4)$ in the same horizontal strip. We iterate through all $O(nm^2)$ such $(i_1,i_3,i_4)$, and count $i_2$ that complete an occurrence of~\emph{3214}. This count is the number of points $(x,\pi(x))$ in the box $(i_1,i_3)\times(\pi(i_3),\pi(i_1))$. Similar to sum-trees, there exist a simple data structure that efficiently answers such queries, which we turn to describe here in brief.

An $n$-by-$n$ \emph{two-dimensional sum-tree} uses a Cartesian product of two complete binary trees of depth $\lceil \log n \rceil$. Every ordered pair of a vertex from each tree holds a number. The pairs of leafs $(l,l')$ thus hold an $n \times n$ array of numbers. A~general pair of vertices $(v,v')$ always holds the sum over pairs of leafs from the corresponding subtrees. Insertion makes access to $O(\log^2n)$ pairs, as well as querying a box sum. If the array is sparse then we may use a hash table, and only store the $O(s\log^2n)$ nonzero entries, where $s$ is the number of nonzero array cells. See~\cite{bentley1975multidimensional,matouvsek1994geometric,deBerg2008,chan2011orthogonal} for various general data structures of this kind. 

Using a two-dimensional sum-tree, with ones at all $(x,\pi(x))$ and zeros elsewhere, the remaining occurrences are counted in $\tilde{O}(nm^2)$ time and $\tilde{O}(n)$ space, as explained above. In order to balance $n^2/m$ and $nm^2$, we set $m = \lfloor n^{1/3} \rfloor$, so that the total running time is~$\tilde{O}(n^{5/3})$. This algorithm together with Proposition~\ref{23} prove Theorem~\ref{n53}.

\bigskip \noindent \textbf{Faster \#\emph{3241}}
\medskip

Theorem~\ref{n32} is proven by another algorithm based on similar ideas, which computes \#\emph{3241} in $\tilde{O}(n^{3/2})$ time and space. As before we divide the occurrences $i_1<i_2<i_3<i_4$ of \emph{3241} into cases. This time we ask whether~\emph{2,1} and whether~\emph{3,4} are contained in the same horizontal strips of height~$m$.
\begin{itemize}
\item[(A)]
$\pi(i_4) \leq am < \pi(i_2)$ for some integer $a$.
\item[(B)]
$\pi(i_1) \leq bm < \pi(i_3)$ for some integer $b$.
\end{itemize}

Since this algorithm is somewhat more involved, we present it in full detail, see Algorithm~\ref{alg3241}. In the following paragraphs we outline the key steps.

\begin{algorithm}[bt]
\caption{Compute \#\emph{3241}}
\linespread{1.25}\selectfont
\label{alg3241}
\begin{algorithmic}[1]
\Function{count-3241}{permutation $\pi$ of size $n$, integer $m$}
\State $t \leftarrow 0$
\Comment{accumulating $\#\textit{3241}$}
\For{$chunk$ \textbf{in} $\{1\dots m\},\{m+1 \dots 2m\} \dots \{\lfloor \tfrac{n}{m} \rfloor m + 1 \dots n\}$}
\State $c \leftarrow 0$
\Comment{$\#\textit{3241}$ given \emph{1} or \emph{4} in $chunk$}
\State $A_1, A_{12}, B_1, B_{21}  \leftarrow 4 \times \textsc{sum-tree}(0 ,\dots, 0)$
\Comment{size $n$ each}
\For{$y$ \textbf{in} $\{\pi(1),\pi(2) ,\dots, \pi(n)\}$}
\If{$y < \min(chunk)$}
\Comment{below strip}
\State{$B_1[y] \leftarrow 1$}
\State{$B_{21}[y] \leftarrow B_1[>y]$}
\Comment{means suffix-sum}
\State{$c \;\leftarrow\; c + B_1[>y] \cdot \left(y-1-B_1[<y]\right) - B_{21}[>y]$}
\Comment{type $B$}
\EndIf
\If{$y > \max(chunk)$}
\Comment{above strip}
\State{$A_1[y] \leftarrow 1$}
\State{$A_{12}[y] \leftarrow A_1[<y]$}
\Comment{means prefix-sum}
\State{$c \;\leftarrow\; c + \binom{A_1[<y]}{2} - A_{12}[<y]$}
\Comment{type $A$}
\State{$y' \;\leftarrow\; \left\lfloor\tfrac{y-1}{m}\right\rfloor m$}
\Comment{$m$-multiple below $y$}
\State{$c \;\leftarrow\; c - \binom{A_1\left[\leq y'\right]}{2} + A_{12}\left[\leq y'\right]$}
\Comment{type $A \cap B$}
\EndIf
\If{$y \in chunk$} {$t \leftarrow t + c$}
\EndIf
\EndFor
\EndFor
\State $P \leftarrow \textsc{product-tree}(n \times n)$
\Comment{2-dim array, all 0}
\For{$y$ \textbf{in} $\left\{1,2,\dots, n\right\}$}
\State{$y' \;\leftarrow\; \left\lfloor\tfrac{y-1}{m}\right\rfloor m$}
\Comment{$m$-multiple below $y$}
\For{$z$ \textbf{in} $\left\{y'+1,y'+2 ,\dots, y-1\right\}$}
\State{$P[\pi^{-1}(z),\pi^{-1}(y)] \leftarrow 1$}
\Comment{product-tree insertion}
\EndFor
\For{$z$ \textbf{in} $\left\{y+1,y+2 ,\dots, y'+m\right\}$}
\If{$\pi^{-1}(y)<\pi^{-1}(z)$} 
\State{$t \leftarrow t + P\textrm{.sum-box}\; \left(\pi^{-1}(z),n\right] \times \left(\pi^{-1}(y),\pi^{-1}(z)\right)$}
\Comment{query}
\EndIf
\EndFor
\EndFor
\State \Return $t$
\EndFunction
\end{algorithmic}
\end{algorithm}

The first part of the algorithm counts occurrences of types A and~B. We thus iterate over horizontal strips as before and count \emph{3241} with: (a) the~\emph{1} in the strip and the \emph{324} above it; (b) the~\emph{4} in the strip and the \emph{321} below it.

When counting type~B, we have to make sure that the~\emph{4} falls between the~\emph{2} and the~\emph{1}. Therefore, as we go over the entries of the permutation, we increment the count of \emph{321} upon observing the~\emph{2} and decrease it back upon observing the~\emph{1}. See the two corresponding terms on line~10 of the algorithm.

As for type~$A$, we count \emph{324} by adding \emph{234+324} and subtracting \emph{234}. This is similar to the treatment of \#\emph{213} in Section~\ref{trees}. See two terms on line~14. The subsequent lines~15-16 further subtract those type~A occurrences that are already counted as type~B for another strip.

The second part counts those \emph{3241} that are neither A nor~B, so that both the~\emph{21} and the~\emph{34} are contained in horizontal strips. We use a two-dimensional sum tree again, this time keeping account of all same-strip pairs of entries. Their coordinates in the tree are the position of the~\emph{1} and the position of the~\emph{2}.

We scan the graph of~$\pi$ bottom to top, inserting around $\frac{n}{m}\binom{m}{2}$ such pairs to the two-dimensional sum tree on line~22. In parallel, for every same-strip increasing pair \emph{34}, we lookup those \emph{21} below it that properly merge to~\emph{3241}. This number is efficiently retrieved by a suitable box sum query, on line~25.

The two parts of the algorithm run in time $\tilde{O}(n^2/m)$ and $\tilde{O}(nm)$. We hence set $m = \lfloor\sqrt{n}\rfloor$ to obtain $\tilde{O}(n^{3/2})$ time complexity. Since we make at most $nm$ insertions into the two-dimensional sum tree, it is sparse and only needs $\tilde{O}(n^{3/2})$ memory space. More generally, we obtain a tradeoff between $\tilde{O}(n^{2-a})$ computation time and $\tilde{O}(n^{1+a})$ space, where $a$ ranges from $0$ to $1/2$. 

We note that this algorithm still seems to contain much overlapping work. For example, the same \emph{321} and \emph{324} are counted many times as we iterate over horizontal strips. Further improvements are yet to be sought, perhaps by dividing into more cases.

\medskip
See also the python implementation, \texttt{count\_3214} and \texttt{count\_3241}, in our package \texttt{cornertree.py}~\cite{even2019corners}. It includes a general-purpose class \texttt{ProductTree} implementing two-dimensional sum-trees.

\section{Discussion}\label{discuss}

The following open problems and directions for future research naturally arise from the results presented in this paper.

\begin{enumerate}
\setlength\itemsep{0.5em}
\item
Are there pattern counts, or linear combinations of them, that are not spanned by corner trees, but still can be computed in $\tilde{O}(n)$ time?
\item
What is the computational complexity of finding the full 4-profile of a given permutation of size $n$?
\item
Is there a simple characterization of the linear space of pattern combinations that are spanned by corner trees?
\item
In particular, what is the dimension of the restriction of that space to patterns of size $k$?
Is it spanned by corner trees with $k$ vertices? 
\item
What other patterns can be counted, say in $\tilde{O}(n^{3/2})$ time, using the ideas presented in Section~\ref{four}?
\end{enumerate}

In the spirit of the last question, it would be interesting to initiate a systematic study, similar to our treatment of corner trees, of other tree-like structures and algorithmic ideas, and the linear spaces of patterns that they yield.

This may also be relevant to the remaining gap in the asymptotic complexity of $k$-pattern counting. Given $k \in \mathbb{N}$, what is the smallest $d(k)$ such that the $k$-profile can be computed in $\tilde{O}(n^{d(k)})$ time as $n$ grows? For example $d(2)=d(3)=1$ and $d(4) \leq \tfrac32$ by our results.  Recall that Berendsohn et al.~showed that for~$k$ large $d(k) \leq \frac{k}{4}+o(k)$, while $o(k/\log k)$ is not likely~\cite{berendsohn2019finding}.

We have computed the space of pattern combinations that are spanned by $k$-vertex corner trees for every $k \leq 7$. Its dimension is clearly bounded by the number of $k$-vertex corner trees, which is $O(11.07^k)$~\cite[A052763]{oeis}. Since the $k$-profile is $k!$-dimensional, a  super-exponential family of such objects would be necessary to span all $k$-patterns.

\medskip
As for more practical research directions, one may improve the computation time of other rank-based statistical tests, applying or extending our methods. Some potential examples are the general families of independence tests recently proposed in~\cite{heller2012consistent,heller2016consistent,weihs2018symmetric}.

By working with permutations, we always assume distinct inputs and distinct outputs, but the applications to data analysis also make sense for distributions with atoms, where ties occur with positive probability. It may hence be useful to extend our methods to such data sets, treating ties appropriately.

\medskip
Another potential extension concerns other families of permutation statistics studied in Combinatorics. In particular, can one adapt corner trees to count so-called generalized permutation patterns, such as vincular patterns~\cite{babson2000generalized}, partially ordered generalized patterns~\cite{kitaev2005partially} and mesh patterns~\cite{branden2011mesh}?

\medskip
Finally, we remark that while this work focuses on the problem of exact counting of patterns, also questions of approximation seem interesting and useful. Several works have been recently devoted to property testing of permutations~\cite{hoppen2011testing,klimovsova2014hereditary,fox2018fast,ben2018improved,newman2019testing}, which may be relevant to the study of approximated pattern counting. 

\subsection*{Acknowledgements}

The authors would like to thank the anonymous
reviewers for their helpful comments, questions and suggestions.

Chaim Even-Zohar was supported by the Lloyd’s Register Foundation / Alan Turing Institute programme on Data-Centric Engineering.

\bibliographystyle{alpha}
\bibliography{counting}

\end{document}